\documentclass[11pt,journal,twoside,onecolumn,draftcls]{IEEEtran}

\usepackage{graphicx,subfigure}
\usepackage{amsfonts,amsmath,amssymb}
\usepackage{epic,eepic,eepicemu}
\usepackage{epsf}
\usepackage{epsfig}
\usepackage{graphics}
\usepackage{psfrag}
\newtheorem{theorem}{Theorem}
\newtheorem{lemma}{Lemma}

\newtheorem{definition}{Definition}

\makeatletter
\def\@cite#1#2{[\if@tempswa #2 \fi #1]}
\makeatother



\newcommand{\beq}{\begin{equation}}
\newcommand{\eeq}{\end{equation}}
\newcommand{\bea}{\begin{eqnarray}}
\newcommand{\eea}{\end{eqnarray}}


\newtheorem{discussion}{Discussion}


\newtheorem{corollary}{Corollary}
\newtheorem{remark}{Remark}
\long\def\symbolfootnote[#1]#2{\begingroup
\def\thefootnote{\fnsymbol{footnote}}\footnote[#1]{#2}\endgroup}

\long\def\comment#1{}



\begin{document}

\title{\huge{A Generalized Cut-Set Bound}}
%
\author{Amin Aminzadeh Gohari and Venkat Anantharam  \\
Department of Electrical Engineering and Computer Science \\ University of California, Berkeley\\
\texttt{\small $\{$aminzade,ananth$\}$@eecs.berkeley.edu} \\
} \maketitle
\begin{abstract}
In this paper, we generalize the well known cut-set bound to the
problem of lossy transmission of functions of arbitrarily correlated
sources over a discrete memoryless multiterminal network.

\end{abstract}
\section{Introduction}
A general multiterminal network is a model for reliable
communication of sets of messages among the nodes of a network, and
has been extensively used in modeling of wireless systems. It is
known that unlike the point-to-point scenario, in a network scenario
the separation of the source and channel codings is not necessarily
optimal \cite{CoverGamalSalehi}. In this paper we study the
limitations of joint source-channel coding strategies for lossy
transmission across multiterminal networks.

A discrete memoryless general multiterminal network (GMN) is
characterized by the conditional distribution
$$q(y^{(1)},y^{(2)},...,y^{(m)}|x^{(1)},x^{(2)},...,x^{(m)}),$$ where
$X^{(i)}$ and $Y^{(i)}$ ($1\leq i \leq m$) are respectively the
input and the output of the channel at the $i^{th}$ party. In a
general multiterminal channel with correlated sources, the $m$ nodes
are observing i.i.d. repetitions of $m$, possibly correlated, random
variables $W^{(i)}$ for $1\leq i \leq m$. The $i^{th}$ party ($1\leq
i \leq m$) has access to the i.i.d. repetitions of $W^{(i)}$, and
wants to reconstruct, within a given distortion, the i.i.d.
repetitions of a function of all the observations, i.e.
$f^{(i)}(W^{(1)}, W^{(2)}, ..., W^{(m)})$ for some function
$f^{(i)}(\cdot)$. If this is asymptotically possible within a given
distortion (see section \ref{Section:Definitions} for a formal
definition), we call the source $(W^{(1)}, W^{(2)}, ..., W^{(m)})$
admissible. In some applications, each party may be interested in
recovering i.i.d. repetitions of functions of the observations made
at different nodes. In this case the function $f^{(i)}(W^{(1)},
W^{(2)}, ..., W^{(m)})$ takes the special form of
$\big(f^{(i,1)}(W^{(1)}), f^{(i,2)}(W^{(2)}), ...,
f^{(i,m)}(W^{(m)})\big)$ for some functions $f^{(i,j)}(\cdot)$.

\begin{figure}
\centering
\includegraphics[width=100mm]{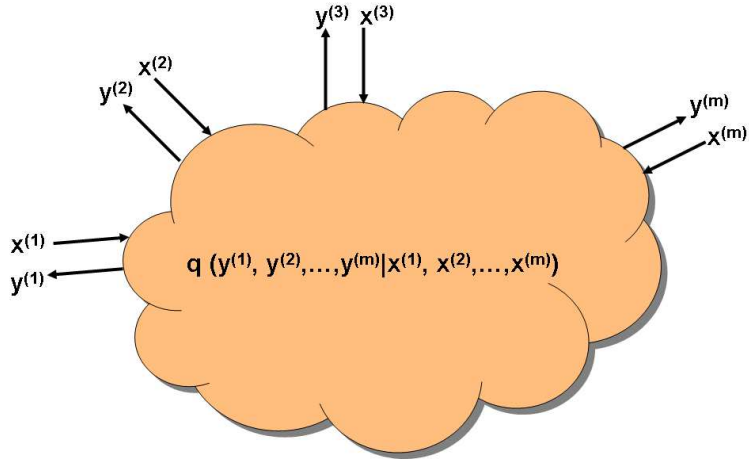}
\caption{The statistical description of a
network.}\label{fig:networkpicture}
\end{figure}

The admissible source region of a general multiterminal network is
not known when the sources are independent except in certain special
cases; less is known when the sources are allowed to be arbitrarily
correlated. It is known that the source$-$channel separation theorem
in a network scenario breaks down \cite{CoverGamalSalehi}. In this
paper, we prove a new outer bound on the admissible source region of
GMNs. Specializing by requiring zero distortion at the receivers,
assuming that the functions $f^{(i)}(W^{(1)}, W^{(2)}, ...,
W^{(m)})$ ($1\leq i \leq m$) have the form of $(f^{(i,1)}(W^{(1)}),
f^{(i,2)}(W^{(2)}), ..., f^{(i,m)}(W^{(m)}))$, and that the
individual messages $f^{(i,j)}(W^{(j)})$ are mutually independent,
our result reduces to the well known cut-set bound. The results can
be carried over to the problem of ``lossless transmission" for the
following reason: requiring the $i^{th}$ party to reconstruct the
i.i.d. repetitions of $f^{(i)}(W^{(1)}, W^{(2)}, ..., W^{(m)})$ with
arbitrarily small average probability of error is no stronger than
requiring the $i^{th}$ party to reconstruct the i.i.d repetitions of
$f^{(i)}(W^{(1)}, W^{(2)}, ..., W^{(m)})$ with a vanishing average
distortion (for details see section \ref{Section:Definitions}).
Other extensions of cut-set bound can be found in \cite{Gastpar} and
\cite{KramerSavari}. Furthermore some existing works show the
possibility and benefit of function computation during the
communication (see for instance
\cite{NazerGastpar}\cite{GiridharKumar}\cite{OrlitskyRoche}\cite{Yamamoto}\cite{AppuswamyFranceschettiKaramchandaniZeger}).

A main contribution of this paper is its proof technique which is
based on the ``potential function method" introduced in \cite{GA_SM}
and \cite{GA_CM}. Instead of taking an arbitrary network and proving
the desired outer bound while keeping the network fixed throughout,
we consider a function from the set of all $m$-input/$m$-output
discrete memoryless networks to subsets of $\mathbb{R}_{+}^c$, where
$\mathbb{R}_{+}^c$ is the set of all $c$-tuples of non-negative
reals. We then identify properties of such a function which would
need to be satisfied in one step of the communication for it to give
rise to an outer bound. The generalized cut-set bound is then proved
by a verification argument. Properties that such a function would
need to satisfy are identified, intuitively speaking, as follows:
take an arbitrary code of length say $n$ over a multiterminal
network. During the simulation of the code, the information of the
parties begins from the $i^{th}$ party having the i.i.d. repetitions
of the random variable $W^{(i)}$; gradually evolves over time with
the usage of the network; and eventually after $n$ stages of
communication reaches its final state where the parties know enough
to estimate their objectives within the desired average distortion.
The idea is to quantify this gradual evolution of information;
\emph{bound the derivative of the information growth at each stage}
from above by showing that one step of communication can buy us at
most a certain amount; and conclude that at the final stage, i.e.
the $n^{th}$ stage, the system can not reach an information state
better than $n$ times the outer bound on the derivative of
information growth. An implementation of this idea requires
quantification of the information of the $m$ parties at a given
stage of the process. To that end, we evaluate the function we
started with at a \emph{virtual channel} whose inputs and outputs
represent, roughly speaking, the initial and the gained knowledge of
the parties at the given stage of the communication. See Lemma
\ref{Thm:MainThm} of section \ref{Section:StatementOfResults} and
the proof of Theorem \ref{Thm:CutSet} of section
\ref{Section:Proofs} for a formal formulation.

The outline of this paper is as follows. In section
\ref{Section:Definitions}, we introduce the basic notations and
definitions used in this paper. Section
\ref{Section:StatementOfResults} contains the main results of this
paper followed by section \ref{Section:Proofs} which gives formal
proofs for the results. Appendices \ref{Appendix:CutSet} and
\ref{Appendix:ExtensionFive} complete the proof of Theorem
\ref{Thm:CutSet} from section \ref{Section:CutSet}.

\section{Definitions and Notation} \label{Section:Definitions}
\begin{table}
\caption{Notations} \centering
\begin{tabular}{|c|c|}
    \hline
Variable &  Description  \\
\hline
    \footnotesize{$\mathbb{R}$}& \footnotesize{Real numbers.} \\
\hline
\footnotesize{$\mathbb{R}_{+}$} & \footnotesize{Non-negative real numbers.}\\
\hline
\footnotesize{$[k]$} & \footnotesize{The set $\{1,2,3,...,k\}.$}\\
\hline
\footnotesize{$m$} & \footnotesize{Number of nodes of the network.}\\
\hline
\footnotesize{$q(y^{(1)},...,y^{(m)}|x^{(1)},...,x^{(m)})$}& \footnotesize{The statistical description of a}\\& \footnotesize{ multi-terminal network.} \\
\hline
\footnotesize{$W^{(i)}$} & \footnotesize{Random variable representing the source observed at the $i^{th}$ node.}\\
\hline
\footnotesize{$M^{(i)}$}& \footnotesize{Random variable to be reconstructed,}\\&\footnotesize{\ in a possibly lossy way, at the $i^{th}$ node.} \\
\hline \footnotesize{$\mathcal{X}^{(i)}, \mathcal{Y}^{(i)}$,
$\mathcal{W}^{(i)}$,
$\mathcal{M}^{(i)}$}& \footnotesize{Alphabet sets of $X^{(i)}$, $Y^{(i)}$, $W^{(i)}$, $M^{(i)}$.} \\
\hline
\footnotesize{$\Delta^{(i)}(\cdot, \cdot)$}& \footnotesize{Distortion function used by the $i^{th}$ party. }\\
\hline \footnotesize{$\zeta_k^{(i)}(\cdot)$}&\footnotesize{The
encoding function used by the $i^{th}$ party at the $k^{th}$ stage.}\\
\footnotesize{$\vartheta^{(i)}(\cdot)$}&\footnotesize{The decoding
function
at the $i^{th}$ party.}\\
\hline
    \footnotesize{$n$}&\footnotesize{Length of the code used.}\\
\hline
    \footnotesize{$\Pi(\cdot)$}& \footnotesize{Down-set (Definition \ref{def:downset});}\\
    \footnotesize{$\oplus$}&\footnotesize{Minkowski sum of two sets (Definition \ref{def:oplus}).}\\
    \footnotesize{$\geq$}&\footnotesize{A vector or a set being greater than or equal the other (Definition \ref{def:downset}).}\\
\hline \footnotesize{$\Psi$}&\footnotesize{A permissible set of
input distributions;} \\ &\footnotesize{Given input sources and a
multiterminal network, $\Psi$ is a set of }\\&\footnotesize{joint
distributions on $\mathcal{X}^{(1)} \times \mathcal{X}^{(2)} \times
\mathcal{X}^{(3)} \times
\cdot\cdot\cdot\times \mathcal{X}^{(m)}$.} \\
&\footnotesize{Inputs to the
 network have a joint distribution belonging to this set}.\\
 \hline
    \end{tabular}
\end{table}
Throughout this paper we assume that each random variable takes
values in a finite set. $\mathbb{R}$ denotes the set of real numbers
and $\mathbb{R}_{+}$ denotes the set of non-negative reals. For any
natural number $k$, let $[k]=\{1,2,3,...,k\}$. For a set $S\subset
[k]$, let $S^c$ denote its compliment, that is $[k]-S$. The context
will make the ambient space of $S$ clear.

We represent a GMN by the conditional distribution
$$q(y^{(1)},y^{(2)},...,y^{(m)}|x^{(1)},x^{(2)},...,x^{(m)})$$
meaning that the input by the $i^{th}$ party is $X^{(i)}$ and the
output at the $i^{th}$ party is $Y^{(i)}$. We assume that the
$i^{th}$ party ($1\leq i \leq m$) has access to i.i.d. repetitions
of $W^{(i)}$. The message that needs to be delivered (in a possibly
lossy manner) to the $i^{th}$ party is taken to be
$M^{(i)}=f^{(i)}(W^{(1)}, W^{(2)}, ..., W^{(m)})$ for some function
$f^{(i)}(\cdot)$. We assume that for any $i\in [m]$, random
variables $X^{(i)}$, $Y^{(i)}$, $W^{(i)}$ and $M^{(i)}$ take values
from discrete sets $\mathcal{X}^{(i)}$, $\mathcal{Y}^{(i)}$,
$\mathcal{W}^{(i)}$ and $\mathcal{M}^{(i)}$ respectively. For any
natural number $n$, let $(\mathcal{X}^{(i)})^n$,
$(\mathcal{Y}^{(i)})^n$, $(\mathcal{W}^{(i)})^n$ and
$(\mathcal{M}^{(i)})^n$ denote the $n$-th product sets of
$\mathcal{X}^{(i)}$, $\mathcal{Y}^{(i)}$, $\mathcal{W}^{(i)}$ and
$\mathcal{M}^{(i)}$. We use $Y_{1:k}^{(i)}$ to denote
$(Y_{1}^{(i)},Y_{2}^{(i)},...,Y_{k}^{(i)})$.

For any $i \in [m]$, let the distortion function $\Delta^{(i)}$ be a
function $\Delta^{(i)}:\mathcal{M}^{(i)}\times \mathcal{M}^{(i)}
\rightarrow [0,\infty)$ satisfying $\Delta^{(i)}(m^{(i)},m^{(i)})=0$
for all $m^{(i)}\in \mathcal{M}^{(i)}$. For any natural number $n$
and vectors $(m_1^{(i)}, m_2^{(i)},...,m_n^{(i)})$ and $(m_1^{'(i)},
m_2^{'(i)}, ...,m_n^{'(i)})$ from $(\mathcal{M}^{(i)})^n$, let
$$\Delta^{(i)}_{n}(m_{1:n}^{(i)},m_{1:n}^{'(i)})=\frac{1}{n}\sum_{k=1}^n\Delta^{(i)}(m_{k}^{(i)}, m_{k}^{'(i)}).$$
Roughly speaking, we require the i.i.d. repetitions of random
variable $M^{(i)}$ to be reconstructed, by the $i^{th}$ party,
within the average distortion of $D^{(i)}$.

\begin{definition} Given natural number $n$, an $\emph{(n)}$\emph{-code} is
the following set of mappings:
\begin{eqnarray*}
\mbox{For any }i\in [m]:&
\zeta_1^{(i)}:&(\mathcal{W}^{(i)})^n\longrightarrow \mathcal{X}^{(i)};\\
\mbox{For any } i\in [m],k\in [n]-\{1\}:&
\zeta_k^{(i)}:&(\mathcal{W}^{(i)})^n \times
(\mathcal{Y}^{(i)})^{k-1}
\longrightarrow \mathcal{X}^{(i)};\\
\mbox{For any }i \in [m]:&\vartheta^{(i)}:&(\mathcal{W}^{(i)})^n
\times (\mathcal{Y}^{(i)})^{n} \longrightarrow
(\mathcal{M}^{(i)})^n.
\end{eqnarray*}
Intuitively speaking $\zeta_k^{(i)}$ is the encoding function of the
$i^{th}$ party at the $k^{th}$ time instance, and $\vartheta^{(i)}$
is the decoding function of the $i^{th}$ party.

Given positive reals $\epsilon$ and $D^{(i)}$ ($1\leq i \leq m$),
and a source marginal distribution $p(w^{(1)},
w^{(2)},...,w^{(m)})$, an $(n)$-code is said to satisfy the average
distortion interval $D^{(i)}$ (for all $i \in [m]$) over the channel
$q(y^{(1)},y^{(2)},...,y^{(m)}|x^{(1)},x^{(2)},...,x^{(m)})$ if the
following ``average distortion" condition is satisfied:

Assume that random variables $W^{(i)}_{1:n}$ for $i\in [m]$ are $n$
i.i.d. repetition of random variables $(W^{(1)}, W^{(2)},... ,
W^{(m)})$ with joint distribution $p(w^{(1)}, w^{(2)},...,w^{(m)})$.
Random variables $X_k^{(i)}$ and $Y_k^{(i)}$ ($k\in [n]$, $i\in[m]$)
are defined according to the following constraints:
$$p(w_{1:n}^{(1)},
w_{1:n}^{(2)}, ..., w_{1:n}^{(m)}, x_{1:n}^{(1)},
x_{1:n}^{(2)},...,x_{1:n}^{(m)}, y_{1:n}^{(1)},
y_{1:n}^{(2)},...,y_{1:n}^{(m)})=$$$$\prod_{k=1}^n p(w_{k}^{(1)},
w_{k}^{(2)}, ..., w_{k}^{(m)})\times \prod_{k=1}^n q(y_k^{(1)},
y_k^{(2)},...,y_k^{(m)}|x_k^{(1)}, x_k^{(2)},...,x_k^{(m)})\times
\prod_{k=1}^n \prod_{i=1}^m p(x_k^{(i)}|w_{1:n}^{(i)},
y_{1:k-1}^{(i)});$$ and that
$X_1^{(i)}=\zeta_1^{(i)}\big(W_{1:n}^{(i)}\big),$ and for any $2\leq
k \leq n$, $X_k^{(i)}=\zeta_k^{(i)}\big(W_{1:n}^{(i)},
Y_{1:k-1}^{(i)}\big).$ Random variables $X_k^{(i)}$ and $Y_k^{(i)}$
are representing the input and outputs of the $i^{th}$ party at the
$k^{th}$ time instance and satisfy the following Markov chains:
$$W_{1:n}^{(1)}...W_{1:n}^{(m)}
Y_{1:k-1}^{(1)}...Y_{1:k-1}^{(m)}-W_{1:n}^{(i)}Y_{1:k-1}^{(i)}-X_k^{(i)},$$
$$W_{1:n}^{(1)}...W_{1:n}^{(m)}
Y_{1:k-1}^{(1)}...Y_{1:k-1}^{(m)}-X_k^{(1)}...X_k^{(m)}-Y_k^{(1)}...Y_k^{(m)}.$$
We then have the following constraint for any $i \in [m]$:
$$\mathbb{E}\bigg[ \Delta_n^{(i)}\bigg(\vartheta^{(i)}\big(W_{1:n}^{(i)}, Y_{1:n}^{(i)}\big), M_{1:n}^{(i)}\bigg)\bigg]\leq D^{(i)}+\epsilon,$$ where
$M_k^{(i)}=f^{(i)}(W_k^{(1)}, W_k^{(2)}, ..., W_k^{(m)})$.
\end{definition}

\begin{definition} \label{def:capacityregion} Given positive reals $D^{(i)}$, a source marginal distribution
$p(w^{(1)}, w^{(2)},...,w^{(m)})$ is called an \emph{admissible
source} over the channel
$q(y^{(1)},y^{(2)},...,y^{(m)}|x^{(1)},x^{(2)},...,x^{(m)})$ if for
every positive $\epsilon$ and sufficiently large $n$, an $(n)$-code
satisfying the average distortion $D^{(i)}$, exists.

The ``independent messages zero distortion capacity region" of the
GMN,
$$C(q(y^{(1)},y^{(2)},...,y^{(m)}|x^{(1)},x^{(2)},...,x^{(m)})),$$
 is a subset of $m^2$-tuples of
non-negative numbers $R^{(i,j)}$ for $i,j\in [m]$ defined as
follows: consider the set of all sets $\mathcal{W}^{(1)},
\mathcal{W}^{(2)}, ..., \mathcal{W}^{(m)}$, functions
$f^{(i)}(W^{(1)}, W^{(2)}, ..., W^{(m)})$ ($1\leq i \leq m$) having
the special form of
$$(f^{(i,1)}(W^{(1)}), f^{(i,2)}(W^{(2)}), ...,
f^{(i,m)}(W^{(m)})),$$ the distortion functions
$\Delta^{(i)}(m^{(i)},m^{'(i)})$ (for $1\leq i \leq m$) being equal
to the indicator function $\textbf{1}[m^{(i)}\neq m^{'(i)}]$,
$D^{(i)}$ being set to be zero for all $1\leq i \leq m$ and
admissible sources $p(w^{(1)}, w^{(2)},...,w^{(m)})$ for which
$f^{(i,j)}(W^{(j)})$'s are mutually independent of each other. The
capacity region is then taken to be the set of all achievable
$R^{(i,j)}=H(f^{(j,i)}(W^{(i)}))$ (for $i,j\in [m]$) given the above
constraints. Intuitively speaking, $R^{(i,j)}$ is the communication
rate from $i^{th}$ party to the $j^{th}$ party.
\end{definition}

\begin{definition} \label{def:oplus} For any natural number $c$ and any two sets of
points $K$ and $L$ in $\mathbb{R}_{+}^c $, let $K \oplus L$ refer to
their Minkowski sum: $K \oplus L = \{v_1+v_2: v_1\in K, v_2\in L\}$.
For any real number $r$, let $r\times K=\{r\cdot v_1: v_1\in K\}$.
We also define $\frac{K}{r}$ as the set formed by shrinking $K$
through scaling each point of it by a factor $\frac{1}{r}$. Note
that in general $r\times K\neq (r_1\times K)\oplus (r_2\times K)$
when $r=r_1+r_2$ but this is true when $K$ is a convex
set.\end{definition}

\begin{definition} \label{def:downset} For any two points $\overrightarrow{v_1}$ and
$\overrightarrow{v_2}$ in $\mathbb{R}_{+}^c$, we say
$\overrightarrow{v_1}\geq \overrightarrow{v_2}$ if and only if each
coordinate of $\overrightarrow{v_1}$ is greater than or equal to the
corresponding coordinate of $\overrightarrow{v_2}$. For any two sets
of points $A$ and $B$ in $\mathbb{R}_{+}^c$, we say $A\leq B$ if and
only if for any point $\overrightarrow{a}\in A$, there exists a
point $\overrightarrow{b}\in B$ such that $\overrightarrow{a}\leq
\overrightarrow{b}$. For a set $A \in \mathbb{R}_{+}^c$, the
down-set $\Pi(A)$ is defined as: $\Pi(A)=\{\overrightarrow{v}
\in \mathbb{R}_{+}^c:\ \overrightarrow{v} \leq \overrightarrow{w}
\mbox{ for some } \overrightarrow{w} \in A\}$.\end{definition}

\begin{definition} \label{def:DefinitionOfPsi}Given a specific network architecture
$q(y^{(1)}, y^{(2)}, ..., y^{(m)}|x^{(1)}, x^{(2)}, ..., x^{(m)})$,
and
 the source marginal distribution $p(w^{(1)},
w^{(2)},...,w^{(m)})$, it may be
 possible to find properties that the inputs to the multiterminal
 network throughout the communication
  satisfy. For
 instance in an interference channel or a multiple access channel with no output feedback, if the transmitters observe
 independent messages, the random variables representing their
 information stay independent of each other throughout the communication. This is because the transmitters
 neither interact nor receive any feedback
 from the outputs. Other constraints on the inputs to the
 network might come from practical requirements such as a maximum instantaneous power used up by one or a group of nodes in each stage of the communication.
  Given a multiterminal network $q(y^{(1)}, y^{(2)}, ..., y^{(m)}|x^{(1)}, x^{(2)}, ..., x^{(m)})$
  and assuming that $\mathcal{X}^{(i)}$ ($i\in [m]$) is the set $X^{(i)}$ is taking value from,
   let $\Psi$ be a set of
joint distributions on $\mathcal{X}^{(1)} \times \mathcal{X}^{(2)}
\times \mathcal{X}^{(3)} \times ...\times \mathcal{X}^{(m)}$ for
which the following guarantee exists: for any communication
protocol, the inputs to the multiterminal network at each time stage
have a joint distribution belonging to the set $\Psi$. Such a set
will be called a \emph{permissible set} of input distributions. Some
of the results below will be stated in terms of this nebulously
defined region $\Psi$. To get explicit results, simply replace
$\Psi$ by the set of all probability distributions
 on $\mathcal{X}^{(1)} \times \mathcal{X}^{(2)} \times
\mathcal{X}^{(3)} \times ...\times \mathcal{X}^{(m)}$.
\end{definition}

\section{Statement of the results}\label{Section:StatementOfResults}
\label{Section:CutSet}
\begin{theorem}\label{Thm:CutSet} Given any GMN
$q(y^{(1)},y^{(2)},...,y^{(m)}|x^{(1)},x^{(2)},...,x^{(m)})$, a
sequence of non-negative real numbers $D^{(i)}$ ($i \in [m]$), an
 arbitrary admissible source $W^{(i)}$ ($i\in [m]$), and a
 permissible
 set of input distributions of the network $\Psi$,
 there exists
\begin{itemize}
  \item joint distribution
$q(x^{(1)},x^{(2)},...,x^{(m)}, z)$ where size of the alphabet set
of $Z$ is $2^m-1$ and furthermore $q(x^{(1)},x^{(2)},...,x^{(m)}|z)$
belongs to $\Psi$ for any value $z$ that the random variable $Z$
might take;
  \item joint distribution $p(\widehat{m}^{(1)}, \widehat{m}^{(2)}, ...,
\widehat{m}^{(m)}, w^{(1)}, w^{(2)}, ..., w^{(m)})$ where the
average distortion between $M^{(i)}=f^{(i)}(W^{(1)}, W^{(2)}, ...,
W^{(m)})$ and $\widehat{M}^{(i)}$ is less than or equal to
$D^{(i)}$, i.e. $\Delta^{(i)}(M^{(i)}, \widehat{M}^{(i)})\leq
D^{(i)}$,
\end{itemize}
  such that for any arbitrary $T\subset
[m]$ the following inequality holds:
\[
I\big(W^{(i)}:i\in T \ \ \textbf{;} \ \ \widehat{M}^{(j)}:j\in
T^c|W^{(j)}:j\in T^c\big)\leq I\big(X^{(i)}:i\in T \ \ \textbf{;} \
\ Y^{(j)}:j\in T^c|X^{(j)}:j\in T^c, Z\big),\] where
$Y^{(1)},Y^{(2)},...,Y^{(m)}, X^{(1)},X^{(2)},...,X^{(m)}$ and $Z$
are jointly distributed according to
$$q(y^{(1)},y^{(2)},...,y^{(m)}|x^{(1)},x^{(2)},...,x^{(m)})\cdot q(x^{(1)},x^{(2)},...,x^{(m)},z).$$
Note that here the following Markov chain holds:
$$Z-X^{(1)},X^{(2)},...,X^{(m)}-Y^{(1)},Y^{(2)},...,Y^{(m)}.$$
\end{theorem}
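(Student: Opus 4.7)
The plan is to start from an $(n)$-code satisfying each distortion within $\epsilon$, prove a per-stage cut-set sandwich, collapse to single-letter form via time-sharing, bring the support of the time-sharing variable down to $2^m-1$ via Carath\'eodory, and pass to the limit $n\to\infty$, $\epsilon\to 0$. Writing $\widehat W_{1:n}^{(i)}\defn\vartheta^{(i)}(W_{1:n}^{(i)},Y_{1:n}^{(i)})$, the key sandwich to establish is, for every nontrivial cut $T\subset[m]$,
\[
\sum_{k=1}^n I(W_k^T;\widehat W_k^{T^c}\,|\,W_k^{T^c})\ \leq\ I(W_{1:n}^T;\widehat W_{1:n}^{T^c}\,|\,W_{1:n}^{T^c})\ \leq\ \sum_{k=1}^n I(X_k^T;Y_k^{T^c}\,|\,X_k^{T^c}).
\]
The upper bound is the classical cut-set expansion: data processing replaces $\widehat W_{1:n}^{T^c}$ by $Y_{1:n}^{T^c}$ in the queried variable, followed by a chain-rule expansion over time, using that $X_k^{T^c}$ is a function of $(W_{1:n}^{T^c},Y_{1:k-1}^{T^c})$ together with the memoryless channel to reduce each step to $I(X_k^T;Y_k^{T^c}\,|\,X_k^{T^c})$. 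The lower bound chain-rule expands along $W_{1:n}^T$, then restricts the queried variable from $\widehat W_{1:n}^{T^c}$ to $\widehat W_k^{T^c}$ by monotonicity of mutual information, and finally invokes the i.i.d. source property $W_k^T\perp W_{\neq k}\,|\,W_k^{T^c}$ via the identity $I(A;B\,|\,C,D)\geq I(A;B\,|\,C)$ (valid when $A\perp D\,|\,C$) to remove the extra conditioning.

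Next, introduce $Q\sim\mathrm{Unif}\{1,\ldots,n\}$ independent of everything else, and define $(X^{(i)},Y^{(i)},W^{(i)},\widehat M^{(i)})\defn (X_Q^{(i)},Y_Q^{(i)},W_Q^{(i)},\widehat W_Q^{(i)})$ and $Z\defn Q$. Each conditional $q(x^{(1)},\ldots,x^{(m)}\,|\,Z=k)$ is the step-$k$ input distribution of the code and therefore lies in $\Psi$ by Definition \ref{def:DefinitionOfPsi}; the code distortion yields $\mathbb{E}[\Delta^{(i)}(M^{(i)},\widehat M^{(i)})]\leq D^{(i)}+\epsilon$. Dividing the sandwich by $n$ rewrites it as $I(W^T;\widehat M^{T^c}\,|\,W^{T^c},Z)\leq I(X^T;Y^{T^c}\,|\,X^{T^c},Z)$, and the i.i.d. property forces $W\perp Z$, hence $I(W^T;\widehat M^{T^c}\,|\,W^{T^c})\leq I(W^T;\widehat M^{T^c}\,|\,W^{T^c},Z)$, establishing the theorem's inequality with $|Z|=n$. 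To bring $|Z|$ down to $2^m-1$, freeze the joint $p(w,\widehat m)$ (which freezes every LHS) and view the RHS vector $\bigl(I(X^T;Y^{T^c}\,|\,X^{T^c},Z)\bigr)_{T\neq\emptyset,[m]}\in\mathbb{R}_+^{2^m-2}$ as a convex combination over $z$ of the points $v(z)\in\mathbb{R}_+^{2^m-2}$; Carath\'eodory's theorem then reduces the support of $Z$ to at most $(2^m-2)+1=2^m-1$ atoms, and each retained conditional remains in $\Psi$ because it is one of the originals. Letting $n\to\infty$ and $\epsilon\to 0$, the constructed pairs live in a compact product of finite-alphabet simplices (with $|Z|=2^m-1$ fixed), so a convergent subsequence exists, and continuity of mutual information and distortion promotes both the inequality and distortion $\leq D^{(i)}$ to the limit.

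The step I expect to be most delicate is the support-reduction count: arriving at exactly $2^m-1$ requires both that the LHS depends only on $p(w,\widehat m)$ (so the Carath\'eodory dimensions come solely from the $2^m-2$ nontrivial cuts) and that $\Psi$-membership survives atom-selection inside a mixture. The per-stage chain-rule work, the time-sharing bookkeeping, and the compactness-continuity limit are otherwise routine in the finite-alphabet setting of the paper, provided the Markov chains built into the $(n)$-code definition are tracked carefully.
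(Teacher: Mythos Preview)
Your argument is correct and is essentially the classical ``direct'' proof of the cut-set bound, extended to lossy reconstruction of functions of correlated sources. The paper takes a different route: it packages the per-stage chain-rule step into an abstract ``potential function'' framework (Lemma~\ref{Thm:MainThm}), postulating three structural properties that a set-valued function $\phi$ of channels must satisfy, and then verifies these properties for the particular $\phi$ built from the cut mutual-information vector. Your upper-bound chain-rule computation is precisely what the paper carries out when checking property~(1) (Appendix~\ref{Appendix:ExtensionOne}), and your lower-bound single-letterization is the content of the footnote in the paper's proof. So the underlying inequalities coincide; the difference is organizational. What the paper's abstraction buys is modularity: Lemma~\ref{Thm:MainThm} is stated for \emph{any} $\phi$ satisfying the three properties, so in principle other choices of $\phi$ yield other outer bounds with no new telescoping work. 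Your direct approach is shorter and more transparent for this specific theorem. On the endgame, the paper handles the $\epsilon$-slack by an explicit perturbation of the reconstructions (Appendix~\ref{Appendix:ExtensionFive}) that forces the distortion to exactly $D^{(i)}$ before taking limits, whereas you pass to the limit via compactness of the finite-alphabet simplices; both are valid, and in fact the paper's final ``letting $\epsilon$ converge to zero'' also implicitly relies on compactness to extract the limiting $q(x,z)$. One small point worth flagging in your writeup: the compactness limit requires that $\Psi$ be closed (or that one works with its closure), a hypothesis neither you nor the paper states explicitly but which is automatic for the default choice of $\Psi$ as all input distributions.
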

\begin{discussion}
The fact that the expressions on both sides of the above inequality
are of the same form is suggestive. To any given channel
$q(y^{(1)},y^{(2)},...,y^{(m)}|x^{(1)},x^{(2)},...,x^{(m)})$ and
input distribution $q(x^{(1)},x^{(2)},...,x^{(m)})$, assign the
down-set of a vector in $\mathbb{R}_{+}^{2^m}$ whose $k^{th}$
coordinate is defined as \[I\big(X^{(i)}:i\in T_k \ \ \textbf{;} \ \
Y^{(j)}:j\in T_k^c|X^{(j)}:j\in T_k^c\big),\] where $T_k$ is defined
as follows: there are $2^m$ subsets of $[m]$; take an arbitrary
ordering of these sets and take $T_k$ to be the $k^{th}$ subset in
that ordering (though not required but for the sake of consistency
with the notation used in the proof of the theorem assume that
$T_{2^k-1}$ and $T_{2^k}$ are the empty set and the full set
respectively). Next, to any channel
$q(y^{(1)},y^{(2)},...,y^{(m)}|x^{(1)},x^{(2)},...,x^{(m)})$ and a
set of permissible input distributions, we assign a region by taking
the convex hull of the union over all permissible input
distributions, of the region associated to the channel and the
varying input distribution. A channel is said to be weaker than
another channel if the region associated to the first channel is
contained in the region associated to the second channel.

Intuitively speaking, given a communication task one can consider a
virtual channel whose inputs and outputs represent, roughly
speaking, the raw and acceptable information objectives at the $m$
parties. Furthermore, let the only permissible input distribution
for this virtual channel to be one given by the statistical
description of the raw information of the parties. More
specifically, given any $p(\widehat{m}^{(1)}, ...,
\widehat{m}^{(m)}, w^{(1)}, ..., w^{(m)})$ such that $
\Delta^{(i)}(M^{(i)}, \widehat{M}^{(i)})\leq D^{(i)}$ holds,
consider the virtual channel $p(\widehat{m}^{(1)},
\widehat{m}^{(2)}, ..., \widehat{m}^{(m)}| w^{(1)}, w^{(2)}, ...,
w^{(m)})$ and the input distribution $p(w^{(1)}, w^{(2)}, ...,
w^{(m)})$. The inputs of this virtual channel, i.e. $W^{(1)},
W^{(2)}, ..., W^{(m)}$, and its outputs, i.e. $\widehat{M}^{(1)},
\widehat{M}^{(2)}, ..., \widehat{M}^{(m)}$, can be understood as the
raw information and acceptable information objectives at the $m$
parties. The region associated to the virtual channel
$p(\widehat{m}^{(1)}, ..., \widehat{m}^{(m)}| w^{(1)}, ...,
w^{(m)})$ and the input distribution $p(w^{(1)}, w^{(2)}, ...,
w^{(m)})$ would be the down-set of a vector in
$\mathbb{R}_{+}^{2^m}$ whose $k^{th}$ coordinate is defined as
\[I\big(W^{(i)}:i\in T \ \ \textbf{;} \ \ \widehat{M}^{(j)}:j\in
T^c|W^{(j)}:j\in T^c\big).\] Theorem \ref{Thm:CutSet} is basically
saying that this region associated to this virtual channel and the
corresponding input distribution should be included inside the
region associated to the channel
$q(y^{(1)},y^{(2)},...,y^{(m)}|x^{(1)},x^{(2)},...,x^{(m)})$. Here
the complexity of transmission of functions of correlated messages
is effectively translated into the performance region of a virtual
channel at a given input distribution. This virtual channel at the
given input distribution must be, in the above mentioned sense,
weaker than any physical channel fit for the communication problem.
\end{discussion}
\begin{corollary} Given any GMN
$q(y^{(1)},y^{(2)},...,y^{(m)}|x^{(1)},x^{(2)},...,x^{(m)})$, the
following region forms an outer bound on the independent messages
zero distortion capacity region (see Definition
\ref{def:capacityregion}) of the network:
\[\bigcup_{\scriptsize{\begin{array}{l}
              q(x^{(1)},x^{(2)},...,x^{(m)},z) \mbox{ such that for any }z\\
              q(x^{(1)},x^{(2)},...,x^{(m)}|z)\in \Psi  \mbox{ and}\\
              \mbox{size of the alphabet set of $Z$ is $2^m-1$}
            \end{array}}}
\bigg\{ \mbox{non-negative }R^{(i,j)} \mbox{ for } i,j\in [m]
\mbox{: for any arbitrary }T\subset [m]\]\[ \sum_{i\in T, j\in
T^c}R^{(i,j)} \leq I\big(X^{(i)}:i\in T \ \ \textbf{;} \ \
Y^{(j)}:j\in T^c|X^{(j)}:j\in T^c, Z\big)\]\[\mbox{is
satisfied.\bigg\}},\] where $Y^{(1)},Y^{(2)},...,Y^{(m)},
X^{(1)},X^{(2)},...,X^{(m)}$ and $Z$ are jointly distributed
according to
$$q(y^{(1)},y^{(2)},...,y^{(m)}|x^{(1)},x^{(2)},...,x^{(m)})\cdot q(x^{(1)},x^{(2)},...,x^{(m)},z).$$
\end{corollary}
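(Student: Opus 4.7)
The plan is to deduce the corollary as a direct specialization of Theorem~\ref{Thm:CutSet} to the zero-distortion independent-messages scenario of Definition~\ref{def:capacityregion}. Fix an achievable rate tuple $(R^{(i,j)})$ in the capacity region, together with the alphabets, functions $f^{(i)}=(f^{(i,1)},\ldots,f^{(i,m)})$ applied coordinate-wise, Hamming distortion, $D^{(i)}=0$, and admissible source $p(w^{(1)},\ldots,w^{(m)})$ with mutually independent $\{f^{(i,j)}(W^{(j)})\}_{i,j}$ and $R^{(i,j)}=H(f^{(j,i)}(W^{(i)}))$ supplied by the definition. Applying Theorem~\ref{Thm:CutSet} to this witness yields a joint distribution $q(x^{(1)},\ldots,x^{(m)},z)$ with $|\mathcal{Z}|=2^m-1$ and $q(x^{(1)},\ldots,x^{(m)}\mid z)\in\Psi$---precisely the constraints of the corollary's union---together with reconstructions $\widehat{M}^{(i)}$ of zero expected Hamming distortion, so $\widehat{M}^{(i)}=M^{(i)}$ almost surely and the right-hand side of the theorem's inequality is already the right-hand side of the bound in the corollary.

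The core of the argument is then to show the left-hand side collapses to $\sum_{i\in T,j\in T^c} R^{(i,j)}$. Expanding $M^{(j)}=(f^{(j,k)}(W^{(k)}):k\in[m])$, the coordinates for $k\in T^c$ are deterministic functions of the conditioning variables $\{W^{(j)}:j\in T^c\}$ and drop out, leaving the mutual information equal to $H(U\mid W^{(j)}:j\in T^c)$, where $U:=(f^{(j,k)}(W^{(k)}):k\in T,\,j\in T^c)$ is a function of $\{W^{(i)}:i\in T\}$. The main obstacle is reducing this conditional entropy to $H(U)$: this requires the $W^{(i)}$'s to be independent across $i$. I would handle it by observing that the capacity region of Definition~\ref{def:capacityregion} depends on the source only through the joint law of $\{f^{(i,j)}(W^{(j)})\}$, so without loss of generality we may replace each $W^{(i)}$ by the tuple $(f^{(l,i)}(W^{(i)}):l\in[m])$---any extra randomness in $W^{(i)}$ beyond these coordinates is ignored by all decoders and leaves the rate tuple $R^{(i,j)}=H(f^{(j,i)}(W^{(i)}))$ unchanged. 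Under this reduction the mutual independence of the $f$'s forces $W^{(1)},\ldots,W^{(m)}$ to be mutually independent.

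With independence in hand, $H(U\mid W^{(j)}:j\in T^c)=H(U)$, and a second application of the mutual independence of the $f$'s gives $H(U)=\sum_{k\in T,j\in T^c}H(f^{(j,k)}(W^{(k)}))=\sum_{i\in T,j\in T^c}R^{(i,j)}$. Chaining these equalities with the inequality of Theorem~\ref{Thm:CutSet} yields the desired cut-set bound for every $T\subset[m]$, and the constructed $q(x^{(1)},\ldots,x^{(m)},z)$ provides a valid witness for membership in the corollary's union. Apart from the reparameterization step identified above, everything is routine entropy bookkeeping combined with a direct invocation of Theorem~\ref{Thm:CutSet}.
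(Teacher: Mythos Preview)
Your overall route—specialize Theorem~\ref{Thm:CutSet} to Hamming distortion with $D^{(i)}=0$, observe that $\widehat M^{(i)}=M^{(i)}$ almost surely, then reduce $I(W_T;M_{T^c}\mid W_{T^c})$ to $\sum_{i\in T,j\in T^c}R^{(i,j)}$—is exactly what the paper intends; the corollary is stated there without a separate proof. You are also right that the last reduction needs $U=(f^{(j,k)}(W^{(k)}):k\in T,\,j\in T^c)$ to be independent of $W_{T^c}$, which does not follow from mutual independence of the $f^{(i,j)}(W^{(j)})$'s alone.

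The gap is in your proposed fix. The claim that ``extra randomness in $W^{(i)}$ beyond $(f^{(l,i)}(W^{(i)}):l)$ is ignored by all decoders'' is false: both the encoder $\zeta^{(i)}_k$ and the decoder $\vartheta^{(i)}$ at node $i$ receive the full block $W^{(i)}_{1:n}$, so correlation between $W^{(i)}$ and the other sources can act as genuine side information. Concretely, with $m=2$, i.i.d.\ uniform bits $A,B$, $W^{(1)}=(A,B)$, $W^{(2)}=A$, $f^{(2,1)}(a,b)=b$, $f^{(1,2)}(a)=a$ and the other $f$'s constant, the four $f$-values are mutually independent and $R^{(1,2)}=R^{(2,1)}=1$; this source is admissible over a one-way unit-capacity channel from node $1$ to node $2$ (node $1$ already knows $A$ and just forwards $B$), yet your reparameterized source $\tilde W^{(1)}=B$, $\tilde W^{(2)}=A$ is \emph{not} admissible over that channel. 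So the replacement is not loss-free, and in fact this example shows that under a literal reading of Definition~\ref{def:capacityregion} the corollary would fail for $T=\{2\}$. The intended reading is the standard one in which each $W^{(i)}$ \emph{is} the tuple of outgoing messages at node $i$ (so the $W^{(i)}$'s are mutually independent by hypothesis); under that reading your obstacle evaporates and the remainder of your argument is correct as written.
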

\begin{remark} This bound is sometimes tight; for instance it is tight
for a multiple access channel with independent source messages when
$\Psi$ is taken to be the set of all mutually independent input
distributions.
\end{remark}
\begin{remark}This bound reduces to the traditional cut-set bound when
$\Psi$ is taken to be the set of all input distributions, and
$I\big(X^{(i)}:i\in T \ \ \textbf{;} \ \ Y^{(i)}:i\in
T^c|X^{(i)}:i\in T^c, Z\big)$ is bounded from above by
\footnote{This is valid because $I\big(X^{(i)}:i\in T \ \ \textbf{;}
\ \ Y^{(j)}:j\in T^c|X^{(j)}:j\in T^c, Z\big)= H\big(Y^{(j)}:j\in
T^c|X^{(j)}:j\in T^c, Z\big)-H\big(Y^{(j)}:j\in T^c|X^{(i)}:i\in
[m], Z\big)=H\big(Y^{(j)}:j\in T^c|X^{(j)}:j\in T^c,
Z\big)-H\big(Y^{(j)}:j\in T^c|X^{(i)}:i\in [m]\big)\leq
H\big(Y^{(j)}:j\in T^c|X^{(j)}:j\in T^c\big)-H\big(Y^{(j)}:j\in
T^c|X^{(i)}:i\in [m]\big)=I\big(X^{(i)}:i\in T \ \ \textbf{;} \ \
Y^{(j)}:j\in T^c|X^{(j)}:j\in T^c\big).$}
$$I\big(X^{(i)}:i\in T \ \ \textbf{;} \ \ Y^{(j)}:j\in
T^c|X^{(j)}:j\in T^c\big).$$\end{remark}

\subsection{The Main Lemma}
During the simulation of the code, the information of the parties
begins from the $i^{th}$ party having $W_{1:n}^{(i)}$ and gradually
evolves over time with the usage of the network. At the $j^{th}$
stage, the $i^{th}$ party has $W_{1:n}^{(i)}Y_{1:j}^{(i)}$. We
represent the information state of the whole system at the $j^{th}$
stage by the virtual channel $p(w_{1:n}^{(1)}y_{1:j}^{(1)},...,
w_{1:n}^{(m)}y_{1:j}^{(m)}|w_{1:n}^{(1)},..., w_{1:n}^{(m)})$ and
the input distribution $p(w_{1:n}^{(1)},..., w_{1:n}^{(m)})$. In
order to quantify the information state, we map the information
state to a subset of $\mathbb{R}_{+}^{c}$ ($c$ is a natural number)
using a function $\phi(.)$. A formal definition of $\phi$ and the
properties we require it to satisfy are as follows:

Let $\phi (p(y^{(1)},...,y^{(m)}| x^{(1)}, ...,x^{(m)}), \Psi)$ be a
function that takes as input an arbitrary $m$-input/$m$-output GMN
and a subset of probability distributions on the inputs of this
network and returns a subset of $\mathbb{R}_{+}^{c}$ where $c$ is a
natural number. $\phi(.)$ is thus a function from the set of all
conditional probability distributions defined on finite sets and a
corresponding set of input distributions, to subsets of
$\mathbb{R}_{+}^{c}$.

Assume that the function $\phi(.)$ satisfies the following three
properties. The intuitive description of the properties is provided
after their formal statement. Please see Definitions \ref{def:oplus}
and \ref{def:downset} for the notations used.
\begin{enumerate}
        \item Assume that the conditional distribution
        $p(y^{(1)}y^{'(1)},y^{(2)}y^{'(2)},...,y^{(m)}y^{'(m)}|x^{(1)},x^{(2)},...,x^{(m)})$
        satisfies the following
\begin{eqnarray*}&p(y^{(1)}y^{'(1)},y^{(2)}y^{'(2)},...,y^{(m)}y^{'(m)}|x^{(1)},...,x^{(m)})
\\&=
p(y^{(1)},y^{(2)},...,y^{(m)}|x^{(1)},x^{(2)},...,x^{(m)})\cdot\\&
p(y^{'(1)},y^{'(2)},...,y^{'(m)}|x^{'(1)},x^{'(2)},...,x^{'(m)}),\end{eqnarray*}
where $X^{'(i)}$ is a deterministic function of $Y^{(i)}$ (i.e.
$H(X^{'(i)}|Y^{(i)})=0$ ($i\in [m]$)). Random variable $X^{'(i)}$
(for $i\in [m]$) is assumed to take value from set
$\mathcal{X}^{'(i)}$. Take an arbitrary input distribution $q(x_1,
x_2,..., x_m)$. This input distribution, together with the
conditional distribution
$p(y^{(1)},y^{(2)},...,y^{(m)}|x^{(1)},x^{(2)},...,x^{(m)})$, impose
a joint distribution $q(x^{'(1)},x^{'(2)},...,x^{'(m)})$ on
$\\(X^{'(1)},X^{'(2)},...,X^{'(m)})$. Then the following constraint
needs to be satisfied for any arbitrary set $\Psi$ of joint
distributions on $\mathcal{X}^{'(1)} \times \mathcal{X}^{'(2)}
\times \cdot\cdot\cdot\times \mathcal{X}^{'(m)}$ that contains
$q(x^{'(1)},x^{'(2)},...,x^{'(m)})$: \begin{eqnarray*}&\phi
\bigg(p(y^{(1)}y^{'(1)},...,y^{(m)}y^{'(m)}|x^{(1)},...,x^{(m)}) \\&
, \{q(x_1,...,x_m)\}\bigg)\subseteq\\& \phi\big(
p(y^{(1)},...,y^{(m)}|x^{(1)},...,x^{(m)}), \{q(x_1,...,x_m)\}\big)
\\&\oplus\
\phi\big(p(y^{'(1)},y^{'(2)},...,y^{'(m)}|x^{'(1)},...,x^{'(m)}),
\Psi\big).\end{eqnarray*}
    \item Assume that
        \begin{eqnarray*}p(y^{(1)},...,y^{(m)}|x^{(1)},...,x^{(m)})=\prod_{i=1}^m \mathbf{1}[y^{(i)}=x^{(i)}].\end{eqnarray*}
        Then we require that for any input
distribution $q(x_1, x_2,..., x_m)$, the set
$$\phi\big(p(y^{(1)},...,y^{(m)}|x^{(1)},...,x^{(m)}),
\{q(x_1,...,x_m)\}\big)$$ contains only the origin in
$\mathbb{R}^{c}$.
    \item Assume that
    \begin{eqnarray*}&p(z^{(1)},...,z^{(m)}, y^{(1)},...,y^{(m)}|x^{(1)},...,x^{(m)})
    =\\& p(y^{(1)},...,y^{(m)}|x^{(1)},...,x^{(m)})\prod_{i=1}^m p(z_i|y_i).\end{eqnarray*}
     Then we require that for any input
distribution $q(x_1, x_2,..., x_m)$,\[\phi \big(p(z^{(1)}, ...,
z^{(m)}|x^{(1)},...,x^{(m)}),
\{q(x_1,...,x_m)\}\big)\subseteq\]\[\phi \big(p(y^{(1)}, ...,
y^{(m)}|x^{(1)},...,x^{(m)}), \{q(x_1,...,x_m)\}\big).\]
\end{enumerate}

The first condition is intuitively saying that additional use of the
channel
$$p(y^{'(1)},y^{'(2)},...,y^{'(m)}|x^{'(1)},x^{'(2)},...,x^{'(m)})$$
can expand $\phi(.)$ by at most
$$\phi\big(p(y^{'(1)},y^{'(2)},...,y^{'(m)}|x^{'(1)},x^{'(2)},...,x^{'(m)}),
\Psi\big).$$ The second condition is intuitively saying that
$\phi(.)$ vanishes if the parties are unable to communicate, that is
each party receives exactly what it puts at the input of the
channel. The third condition is basically saying that making a
channel weaker at each party can not cause $\phi(.)$ expand.

\begin{lemma} \label{Thm:MainThm} For any function $\phi(.)$
satisfying the above three properties, and for any multiterminal
network
$$q(y^{(1)},y^{(2)},...,y^{(m)}|x^{(1)},x^{(2)},...,x^{(m)}),$$
distortions $D^{(i)}$ and arbitrary admissible source $W^{(i)}$ ($i
\in [m]$), positive $\epsilon$ and $(n)$-code satisfying the
distortion constraints and a permissible set $\Psi$ of input
distributions, we have (for the definition of multiplication of a
set by a real number see Definition \ref{def:oplus}):
\[
\phi\big( p(\widehat{m}_{1:n}^{(1)}, ...,
\widehat{m}_{1:n}^{(m)}|w_{1:n}^{(1)},...,w_{1:n}^{(m)}),
\{p(w_{1:n}^{(1)},...,w_{1:n}^{(m)})\}\big)\subseteq\]
\[ n\times Convex\ Hull\big\{\phi\big(q(y^{(1)},...,y^{(m)}|x^{(1)},...,x^{(m)}), \Psi\big)\big\},
\]
where $W_{1:n}^{(i)}\ (i \in [m])$ are the messages observed at the
nodes; $\widehat{M}_{1:n}^{(i)}\ (i \in [m])$ are the
reconstructions by the parties at the end of the communication
satisfying
$$\mathbb{E}\bigg[ \Delta_n^{(i)}\bigg((\widehat{m}_{1:n}^{(i)},
m_{1:n}^{(i)}\bigg)\bigg]\leq D^{(i)}+\epsilon,$$ for any $i\in
[m]$.
\end{lemma}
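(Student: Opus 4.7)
The plan is to track the parties' gradually accumulating information stage by stage via a family of virtual channels $\mathcal{V}_{0},\mathcal{V}_{1},\ldots,\mathcal{V}_{n}$, bound $\phi$ of $\mathcal{V}_{j}$ in terms of $\phi$ of $\mathcal{V}_{j-1}$ by a single invocation of property~1 per stage, and finally pass from $\mathcal{V}_{n}$ to the reconstruction channel via property~3. Specifically, define $\mathcal{V}_{j}$ to be the $m$-input/$m$-output channel whose input at party~$i$ is $W^{(i)}_{1:n}$ and whose output at party~$i$ is the pair $\bigl(W^{(i)}_{1:n},Y^{(i)}_{1:j}\bigr)$, where the $Y^{(i)}_{k}$ are generated by feeding $X^{(i)}_{k}=\zeta^{(i)}_{k}(W^{(i)}_{1:n},Y^{(i)}_{1:k-1})$ into the physical channel $q(y^{(1)},\ldots,y^{(m)}|x^{(1)},\ldots,x^{(m)})$. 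Writing $p_{W}=p(w^{(1)}_{1:n},\ldots,w^{(m)}_{1:n})$ for the source distribution, the channel $\mathcal{V}_{0}$ has output equal to input at every party, so property~2 yields $\phi(\mathcal{V}_{0},\{p_{W}\})=\{\vec{0}\}$.

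The heart of the argument is the inductive claim $\phi(\mathcal{V}_{j},\{p_{W}\})\subseteq \phi(\mathcal{V}_{j-1},\{p_{W}\})\oplus \phi(q,\Psi)$. To verify it I would exhibit $\mathcal{V}_{j}$ as the serial composition of $\mathcal{V}_{j-1}$ followed by $q$: the output of $\mathcal{V}_{j-1}$ at party~$i$ is $\bigl(W^{(i)}_{1:n},Y^{(i)}_{1:j-1}\bigr)$, from which the deterministic encoder $\zeta^{(i)}_{j}$ produces the input $X^{(i)}_{j}$ to $q$; the channel $q$ then delivers $Y^{(i)}_{j}$, which appended to $\bigl(W^{(i)}_{1:n},Y^{(i)}_{1:j-1}\bigr)$ reproduces exactly the output of $\mathcal{V}_{j}$ at party~$i$. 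This matches the factorization required in property~1, with $X^{'(i)}=X^{(i)}_{j}$ identified as a deterministic function of the $i$-th output of the first channel. Because $\Psi$ is a permissible set of input distributions for $q$, the joint distribution of $(X^{(1)}_{j},\ldots,X^{(m)}_{j})$ induced by running the code on the actual source lies in $\Psi$, so property~1 applies with this very $\Psi$ on the right-hand side. Iterating from $j=1$ to $j=n$ and invoking the base case produces $\phi(\mathcal{V}_{n},\{p_{W}\})\subseteq \bigoplus_{k=1}^{n}\phi(q,\Psi)$.

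To close, I would pass to the convex hull: since $\phi(q,\Psi)\subseteq \mathrm{ConvHull}\bigl\{\phi(q,\Psi)\bigr\}$ and the $n$-fold Minkowski sum of a convex set equals $n$ times that set,
\[
\phi(\mathcal{V}_{n},\{p_{W}\})\subseteq n\times \mathrm{ConvHull}\bigl\{\phi(q,\Psi)\bigr\}.
\]
The reconstruction channel $p(\widehat{m}^{(1)}_{1:n},\ldots,\widehat{m}^{(m)}_{1:n}|w^{(1)}_{1:n},\ldots,w^{(m)}_{1:n})$ is obtained from $\mathcal{V}_{n}$ by letting party~$i$ apply the deterministic decoder $\vartheta^{(i)}$ to its own output $\bigl(W^{(i)}_{1:n},Y^{(i)}_{1:n}\bigr)$, a special case of the per-party degradation addressed by property~3; consequently $\phi$ of the reconstruction channel is contained in $\phi(\mathcal{V}_{n},\{p_{W}\})$, and the lemma follows. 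The main obstacle is making the decomposition in the inductive step line up with property~1 on the nose---identifying the output of $\mathcal{V}_{j-1}$ at party~$i$ with the variable $Y^{(i)}$ of the property and the encoder output $X^{(i)}_{j}$ with its deterministic function $X^{'(i)}$---and then verifying that the induced distribution on $(X^{(1)}_{j},\ldots,X^{(m)}_{j})$ genuinely lies in $\Psi$, which is immediate from permissibility; properties~2 and~3 play only the light roles of initializing and terminating the inductive chain.
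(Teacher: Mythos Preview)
Your proposal is correct and follows essentially the same approach as the paper: define the virtual channels $\mathcal{V}_j$ with inputs $W^{(i)}_{1:n}$ and outputs $(W^{(i)}_{1:n},Y^{(i)}_{1:j})$, use property~1 at each stage with $X^{'(i)}=\zeta^{(i)}_j(W^{(i)}_{1:n},Y^{(i)}_{1:j-1})$ to peel off one use of $q$, invoke property~2 on the identity channel $\mathcal{V}_0$, and absorb the $n$-fold Minkowski sum into $n$ times the convex hull. The only cosmetic difference is ordering: the paper applies property~3 at the very start (passing from the reconstruction channel to $\mathcal{V}_n$) and property~2 at the end, whereas you use property~2 as the base case and property~3 as the final step; both orderings are valid and the substance is identical.
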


\section{Proofs} \label{Section:Proofs}

\begin{proof}[Proof of Lemma \ref{Thm:MainThm}] Let random variables $X_k^{(i)}$ and $Y_k^{(i)}$ ($k\in
[n]$, $i\in [m]$) respectively represent the inputs to the
multiterminal network and the outputs at the nodes of the network.
We have:
\begin{equation}\label{Eq:Thm102}
\phi\big( p(\widehat{m}_{1:n}^{(1)}, \widehat{m}_{1:n}^{(2)}, ...,
\widehat{m}_{1:n}^{(m)}|w_{1:n}^{(1)},w_{1:n}^{(2)},...,w_{1:n}^{(m)}),
\{p(w_{1:n}^{(1)},w_{1:n}^{(2)},...,w_{1:n}^{(m)})\}\big)\subseteq
\end{equation}
\begin{equation}\label{Eq:Thm1i2}
\phi\big(
p(w_{1:n}^{(1)}y_{1:n}^{(1)},w_{1:n}^{(2)}y_{1:n}^{(2)},...,w_{1:n}^{(m)}y_{1:n}^{(m)}|w_{1:n}^{(1)},w_{1:n}^{(2)},...,w_{1:n}^{(m)}),
\{p(w_{1:n}^{(1)},w_{1:n}^{(2)},...,w_{1:n}^{(m)})\}\big)\subseteq\end{equation}
\[\phi\big(
p(w_{1:n}^{(1)}y_{1:n-1}^{(1)},w_{1:n}^{(2)}y_{1:n-1}^{(2)},...,w_{1:n}^{(m)}y_{1:n-1}^{(m)}|w_{1:n}^{(1)},w_{1:n}^{(2)},...,w_{1:n}^{(m)}),
\{p(w_{1:n}^{(1)},w_{1:n}^{(2)},...,w_{1:n}^{(m)})\}\big)\oplus\]\[
\phi(q(y_n^{(1)},y_n^{(2)},...,y_n^{(m)}|x_n^{(1)},x_n^{(2)},...,x_n^{(m)}),
\Psi) \subseteq\]
\[\phi\big(
p(w_{1:n}^{(1)}y_{1:n-2}^{(1)},w_{1:n}^{(2)}y_{1:n-2}^{(2)},...,w_{1:n}^{(m)}y_{1:n-2}^{(m)}|w_{1:n}^{(1)},w_{1:n}^{(2)},...,w_{1:n}^{(m)}),
\{p(w_{1:n}^{(1)},w_{1:n}^{(2)},...,w_{1:n}^{(m)})\}\big)\oplus\]
\[
\phi(q(y_{n-1}^{(1)},y_{n-1}^{(2)},...,y_{n-1}^{(m)}|x_{n-1}^{(1)},x_{n-1}^{(2)},...,x_{n-1}^{(m)}),
\Psi) \oplus\]
\[
\phi(q(y_n^{(1)},y_n^{(2)},...,y_n^{(m)}|x_n^{(1)},x_n^{(2)},...,x_n^{(m)}),
\Psi) \subseteq\]
\[
\cdot \cdot \cdot\subseteq
\]
\[\phi\big(
p(w_{1:n}^{(1)},w_{1:n}^{(2)},...,w_{1:n}^{(m)}|w_{1:n}^{(1)},w_{1:n}^{(2)},...,w_{1:n}^{(m)}),
\{p(w_{1:n}^{(1)},w_{1:n}^{(2)},...,w_{1:n}^{(m)})\}\big)\oplus\]
\[\phi(q(y_{1}^{(1)},y_{1}^{(2)},...,y_{1}^{(m)}|x_{1}^{(1)},x_{1}^{(2)},...,x_{1}^{(m)}),
\Psi) \oplus\]
\[
\phi(q(y_{2}^{(1)},y_{2}^{(2)},...,y_{2}^{(m)}|x_{2}^{(1)},x_{2}^{(2)},...,x_{2}^{(m)}),
\Psi) \oplus \cdot \cdot \cdot\]
\[
\phi(q(y_{n-1}^{(1)},y_{n-1}^{(2)},...,y_{n-1}^{(m)}|x_{n-1}^{(1)},x_{n-1}^{(2)},...,x_{n-1}^{(m)}),
\Psi) \oplus\]
\begin{equation}\label{Eq:Thm1ii2}
\phi(q(y_n^{(1)},y_n^{(2)},...,y_n^{(m)}|x_n^{(1)},x_n^{(2)},...,x_n^{(m)}),
\Psi) \subseteq\end{equation}
\[\phi(
q(y_{1}^{(1)},y_{1}^{(2)},...,y_{1}^{(m)}|x_{1}^{(1)},x_{1}^{(2)},...,x_{1}^{(m)}),
\Psi)\oplus\]
\[
\phi(q(y_{2}^{(1)},y_{2}^{(2)},...,y_{2}^{(m)}|x_{2}^{(1)},x_{2}^{(2)},...,x_{2}^{(m)}),
\Psi) \oplus \cdot \cdot \cdot\]
\[
\phi(q(y_{n-1}^{(1)},y_{n-1}^{(2)},...,y_{n-1}^{(m)}|x_{n-1}^{(1)},x_{n-1}^{(2)},...,x_{n-1}^{(m)}),
\Psi) \oplus\]
\begin{equation}\label{Eq:Thm1iii2}
\phi(q(y_n^{(1)},y_n^{(2)},...,y_n^{(m)}|x_n^{(1)},x_n^{(2)},...,x_n^{(m)}),
\Psi) \subseteq\end{equation}
\[ n\times Convex\ Hull
\big\{\phi(q(y^{(1)},y^{(2)},...,y^{(m)}|x^{(1)},x^{(2)},...,x^{(m)}),
\Psi)\big\},
\]
where in equation \ref{Eq:Thm102} we have used property (3); in
equation \ref{Eq:Thm1i2} we have used property (1) because
$$p(w_{1:n}^{(1)}y_{1:n}^{(1)},w_{1:n}^{(2)}y_{1:n}^{(2)},...,w_{1:n}^{(m)}y_{1:n}^{(m)}|w_{1:n}^{(1)},w_{1:n}^{(2)},...,w_{1:n}^{(m)})=$$
$$p(w_{1:n}^{(1)}y_{1:n-1}^{(1)},w_{1:n}^{(2)}y_{1:n-1}^{(2)},...,w_{1:n}^{(m)}y_{1:n-1}^{(m)}|w_{1:n}^{(1)},w_{1:n}^{(2)},...,w_{1:n}^{(m)})\cdot
p(y_{n}^{(1)},...,y_{n}^{(m)}|x_{n}^{(1)},x_{n}^{(2)},...,x_{n}^{(m)})$$
 and furthermore $H(X_{n}^{(i)}|W_{1:n}^{(i)}Y_{1:n-1}^{(i)})=0$ for all $i\in [m]$, and
that
$$p(y_{n}^{(1)},y_{n}^{(2)},...,y_{n}^{(m)}|x_{n}^{(1)},x_{n}^{(2)},...,x_{n}^{(m)})=q(y_{n}^{(1)},y_{n}^{(2)},...,y_{n}^{(m)}|x_{n}^{(1)},x_{n}^{(2)},...,x_{n}^{(m)}).$$
The definition of permissible sets implies that the joint
distribution $p(x_{n}^{(1)},x_{n}^{(2)},...,x_{n}^{(m)})$ is in
$\Psi$; in equation \ref{Eq:Thm1ii2} we have used property (2). In
equation \ref{Eq:Thm1iii2}, we first note that the conditional
distributions
$$q(y_i^{(1)},y_i^{(2)},...,y_i^{(m)}|x_i^{(1)},x_i^{(2)},...,x_i^{(m)})$$
for $i=1,2,...,n$ are all the same. We then observe that whenever
$\overrightarrow{v_i} \in \phi(q(y^{(1)},
...,y^{(m)}|x^{(1)},...,x^{(m)}), \Psi)$ for $i \in [n]$, their
average, $\frac{1}{n}\sum_{i=1}^n \overrightarrow{v_i}$ falls in the
convex hull of $ \phi(q(y^{(1)},...,y^{(m)}|x^{(1)},...,x^{(m)}),
\Psi)$.
\end{proof}

\begin{proof}[Proof of Theorem \ref{Thm:CutSet}] The inequalities always hold
for the extreme cases of the set $T$ being either empty or $[m]$.
So, it is sufficient to consider only those subsets of $[m]$ that
are neither empty nor equal to $[m]$. Take an arbitrary $\epsilon >
0$ and an $(n)$-code satisfying the average distortion condition
$D^{(i)}$ (for all $i \in [m]$) over the channel
$q(y^{(1)},y^{(2)},...,y^{(m)}|x^{(1)},x^{(2)},...,x^{(m)})$. Let
random variables $X_k^{(i)}$ and $Y_k^{(i)}$ ($k\in [n]$, $i\in
[m]$) respectively represent the inputs to the multiterminal network
and the outputs at the nodes of the network. Also assume that
$W_{1:n}^{(i)}\ (i \in [m])$ are the messages observed at the nodes.
Let $\widehat{M}_{1:n}^{(i)}\ (i \in [m])$ be the reconstructions
 by the parties at the end of the communication satisfying
$$\mathbb{E}\bigg[ \Delta_n^{(i)}\bigg((\widehat{m}_{1:n}^{(i)},
m_{1:n}^{(i)}\bigg)\bigg]\leq D^{(i)}+\epsilon,$$ for any $i\in
[m]$. Lastly, let $\Psi$ be a permissible set of input
distributions.

We define a function $\phi(.)$ as follows: for any conditional
distribution
$p(y^{(1)},y^{(2)},...,y^{(m)}|x^{(1)},x^{(2)},...,x^{(m)})$ and an
arbitrary set $\Psi$ of distributions on $\mathcal{X}^{(1)}\times
\mathcal{X}^{(2)}\times \cdot\cdot\cdot \mathcal{X}^{(m)}$, let
\begin{equation}\label{Eq:DefinitionOfHardPhi}
\phi(p(y^{(1)},y^{(2)},...,y^{(m)}|x^{(1)},x^{(2)},...,x^{(m)}),
\Psi))=\end{equation}$$\bigcup_{p(x^{(1)},x^{(2)},...,x^{(m)})\in
\Psi}
\varphi\big(p(y^{(1)},y^{(2)},...,y^{(m)}|x^{(1)},x^{(2)},...,x^{(m)})p(x^{(1)},x^{(2)},...,x^{(m)})\big),$$
 where $\varphi(p(y^{(1)},y^{(2)},...,y^{(m)},x^{(1)},x^{(2)},...,x^{(m)}))$ is defined as the down-set
\footnote{For the definition of
 a down-set see Definition
\ref{def:downset}} of a vector of size $c=2^m-2$ whose $k^{th}$
coordinate equals $I\big(X^{(i)}:i\in T_k \ \ \textbf{;} \ \
Y^{(j)}:j\in (T_k)^c|X^{(j)}:j\in (T_k)^c\big)$ where $T_k$ is
defined as follows: there are $2^m-2$ subsets of $[m]$ that are
neither empty nor equal to $[m]$. Take an arbitrary ordering of
these sets and take $T_k$ to be the $k^{th}$ subset in that
ordering.

In appendices \ref{Appendix:ExtensionOne},
\ref{Appendix:ExtensionTwo} and \ref{Appendix:ExtensionThree}, we
 verify that $\phi(.)$ satisfies the
three properties of Lemma \ref{Thm:MainThm} for the choice of
$c=2^m-2$. Lemma \ref{Thm:MainThm} thus implies that (for the
definition of multiplication of a set by a real number see
Definition \ref{def:oplus}):
\[
\phi\big( p(\widehat{m}_{1:n}^{(1)},
\widehat{m}_{1:n}^{(2)},...,\widehat{m}_{1:n}^{(m)}|w_{1:n}^{(1)},w_{1:n}^{(2)},...,w_{1:n}^{(m)}),\{p(w_{1:n}^{(1)},w_{1:n}^{(2)},...,w_{1:n}^{(m)})\}\big)=\]
\[
\varphi\big( p(\widehat{m}_{1:n}^{(1)},
\widehat{m}_{1:n}^{(2)},...,\widehat{m}_{1:n}^{(m)}|w_{1:n}^{(1)},w_{1:n}^{(2)},...,w_{1:n}^{(m)})p(w_{1:n}^{(1)},w_{1:n}^{(2)},...,w_{1:n}^{(m)})\big)\subseteq\]
\[ n\times Convex\ Hull\big\{\phi(q(y^{(1)},y^{(2)},...,y^{(m)}|x^{(1)},x^{(2)},...,x^{(m)}), \Psi)\big\}.
\]

According to the Carath\'{e}odory theorem, every point inside the
convex hull of
$$\phi(q(y^{(1)},y^{(2)},...,y^{(m)}|x^{(1)},x^{(2)},...,x^{(m)}),
\Psi)$$ can be written as a convex combination of $c+1=2^m-1$ points
in the set. Corresponding to the $i^{th}$ point in the convex
combination ($i\in [2^m-1]$) is an input distribution
$q_i(x^{(1)},x^{(2)},...,x^{(m)})$ such that the point lies in
$$\varphi\big(q(y^{(1)},y^{(2)},...,y^{(m)}|x^{(1)},x^{(2)},...,x^{(m)})q_i(x^{(1)},x^{(2)},...,x^{(m)})\big).$$
Let $p(x^{(1)},x^{(2)},...,x^{(m)},z)=p(z)\cdot
q_z(x^{(1)},x^{(2)},...,x^{(m)})$ where $Z$ is a random variable
defined on the set $\{1,2,3,...,2^m-1\}$, taking value $i$ with
probability equal to the weight associated to the $i^{th}$ point in
the above convex combination. The convex hull of
$\phi(q(y^{(1)},y^{(2)},...,y^{(m)}|x^{(1)},x^{(2)},...,x^{(m)}),
\Psi)$ is therefore included in (see Definition \ref{def:oplus} for
the
definition of the summation used here): \[
\bigcup_{\scriptsize{\begin{array}{l}
              q(x^{(1)},x^{(2)},...,x^{(m)},z) \mbox{ such that for any }z\\
              q(x^{(1)},x^{(2)},...,x^{(m)}|z)\in \Psi  \mbox{ and}\\
              \mbox{size of the alphabet set of $Z$ is $2^m-1$}
            \end{array}}}
\sum_{z}p(z)\times
\varphi\big(q(y^{(1)},...,y^{(m)}|x^{(1)},...,x^{(m)})q(x^{(1)},...,x^{(m)}|z)\big).\]
Conversely, the above set only involves convex combination of points
in $\phi(q(y^{(1)},...,y^{(m)}|x^{(1)},...,x^{(m)}), \Psi)$ and
hence is always contained in the convex hull of
$\phi(q(y^{(1)},...,y^{(m)}|x^{(1)},...,x^{(m)}), \Psi)$. Therefore
it must be equal to the convex hull region.

Hence,
\[
\varphi\big(
p(\widehat{m}_{1:n}^{(1)},\widehat{m}_{1:n}^{(2)},...,\widehat{m}_{1:n}^{(m)}|w_{1:n}^{(1)},w_{1:n}^{(2)},...,w_{1:n}^{(m)})p(w_{1:n}^{(1)},w_{1:n}^{(2)},...,w_{1:n}^{(m)})\big)\subseteq\]
\[n \times \bigcup_{\scriptsize{\begin{array}{l}
              q(x^{(1)},x^{(2)},...,x^{(m)},z) \mbox{ such that for any }z\\
              q(x^{(1)},x^{(2)},...,x^{(m)}|z)\in \Psi  \mbox{ and}\\
              \mbox{size of the alphabet set of $Z$ is $2^m-1$}
            \end{array}}}
\sum_{z}p(z)\times
\varphi\big(q(y^{(1)},...,y^{(m)}|x^{(1)},...,x^{(m)})q(x^{(1)},...,x^{(m)}|z)\big).\]

The set \[ \varphi\big(
p(\widehat{m}_{1:n}^{(1)},\widehat{m}_{1:n}^{(2)},...,\widehat{m}_{1:n}^{(m)}|w_{1:n}^{(1)},w_{1:n}^{(2)},...,w_{1:n}^{(m)})p(w_{1:n}^{(1)},w_{1:n}^{(2)},...,w_{1:n}^{(m)})\big)\]
is by definition the down-set of a vector of length $2^m-2$, denoted
here by $\overrightarrow{v}$, whose $k^{th}$ coordinate is equal to
$$I\big(W_{1:n}^{(i)}:i\in T_k \ \ \textbf{;} \ \
\widehat{M}_{1:n}^{(j)}:j\in (T_k)^c|W_{1:n}^{(j)}:j\in
(T_k)^c\big).$$ The vector $\overrightarrow{v}$ is greater than or
equal to $\overrightarrow{\widetilde{v}}$ whose $k^{th}$ element
equals:\footnote{This is because for any arbitrary random variables
$X^n, Y^n, Z^n$ such that $(X^n, Y^n)$ is $n$ i.i.d. repetition of
$(X, Y)$, we have: $I(X^n;Z^n|Y^n)=nH(X|Y)-H(X^n|Z^nY^n)\geq
\sum_{g=1}^nH(X_g|Y_g)-H(X_g|Y_gZ_g)=\sum_{g=1}^n
I(X_g;Z_g|Y_g)=n\cdot I(X_G;Z_G|GY_G)\geq n\cdot I(X_G;Z_G|Y_G)$
where $G$ is uniform over $\{1,2,...,n\}$ and independent of $(X^n,
Y^n, Z^n)$. Random variables $(X_G, Y_G)$ have the same joint
distribution as $(X, Y)$.}
$$n\cdot I\big(\widetilde{W}^{(i)}:i\in T_k \ \ \textbf{;} \ \
\widetilde{\widehat{M}^{(j)}}:j\in (T_k)^c|\widetilde{W}^{(j)}:j\in
(T_k)^c\big),$$ for some $\widetilde{W}^{(i)}$ and
$\widetilde{\widehat{M}^{(i)}}$ ($i\in [m]$) such that the joint
distribution of $\widetilde{W}^{(i)}$ ($i\in [m]$) is the same as
that of $W^{(i)}$ ($i\in [m]$), and that the average distortion
between $\widetilde{M}^{(i)}=f^{(i)}(\widetilde{W}^{(1)},
\widetilde{W}^{(2)}, ..., \widetilde{W}^{(m)})$ and
$\widetilde{\widehat{M}^{(i)}}$ is less than or equal to
$D^{(i)}+\epsilon$.\footnote{This is because for any arbitrary pair
$(Y^n, Z^n)$, the average distortion between $Y_G$ and $Z_G$ for $G$
uniform over $\{1,2,...,n\}$ and independent of $(Y^n, Z^n)$, is
equal to
$\mathbb{E}[\Delta(Y_G,Z_G)]=\mathbb{E}[\mathbb{E}[\Delta(Y_G,Z_G)|G]]=\sum_{g=1}^{n}
\frac{1}{n}\mathbb{E}[\Delta (Y_g, Z_g)]=\mathbb{E}[\Delta_n (Y^n,
Z^n)]$.} In Appendix \ref{Appendix:ExtensionFive}, we perturb random
variables $\widetilde{\widehat{M}^{(i)}}$ (for $i \in [m]$) and
define random variables $\widetilde{\widehat{M}^{'(i)}}$ (for $i \in
[m]$) such that for every $i\in [m]$, the average distortion between
$\widetilde{\widehat{M}^{'(i)}}$ and $\widetilde{M}^{(i)}$ is less
than or equal to $D^{(i)}$ (rather than $D^{(i)}+\epsilon$ as in the
case of $\widetilde{\widehat{M}^{(i)}}$) and furthermore for every
$k$
$$I\big(\widetilde{W}^{(i)}:i\in T_k \ \ \textbf{;} \ \
\widetilde{\widehat{M}^{'(j)}}:j\in (T_k)^c|\widetilde{W}^{(j)}:j\in
(T_k)^c\big)-O(\tau(\epsilon))\leq$$$$I\big(\widetilde{W}^{(i)}:i\in
T_k \ \ \textbf{;} \ \ \widetilde{\widehat{M}^{(j)}}:j\in
(T_k)^c|\widetilde{W}^{(j)}:j\in (T_k)^c\big),$$ where $\tau(.)$ is
a real-valued function that satisfies the property that
$\tau(\epsilon)\rightarrow 0$ as $\epsilon \rightarrow 0$.

Hence the vector $\overrightarrow{\widetilde{v}}$
 is coordinate by coordinate greater than or equal to a vector $\overrightarrow{\widetilde{v}'}$ whose $k^{th}$ element is
defined as
$$\max\bigg(n\cdot I\big(\widetilde{W}^{(i)}:i\in T_k \ \ \textbf{;} \ \
\widetilde{\widehat{M}^{'(j)}}:j\in (T_k)^c|\widetilde{W}^{(j)}:j\in
(T_k)^c\big)-n\cdot O(\tau(\epsilon)), 0\bigg).$$ The vector
$\overrightarrow{\widetilde{v}'}$ must lie in
\[
\varphi\big(
p((\widehat{m}_{1:n}^{(1)},\widehat{m}_{1:n}^{(2)},...,\widehat{m}_{1:n}^{(m)}|w_{1:n}^{(1)},w_{1:n}^{(2)},...,w_{1:n}^{(m)})p(w_{1:n}^{(1)},w_{1:n}^{(2)},...,w_{1:n}^{(m)})\big),\]
since it is coordinate by coordinate less than or equal to
$\overrightarrow{\widetilde{v}}$. It must therefore also lie in \[n
\times \bigcup_{\scriptsize{\begin{array}{l}
              q(x^{(1)},...,x^{(m)},z) \mbox{ such that for any }z\\
              q(x^{(1)},...,x^{(m)}|z)\in \Psi  \mbox{ and}\\
              \mbox{size of the alphabet set of $Z$ is $2^m-1$}
            \end{array}}}
\sum_{z}p(z)\times
\varphi\big(q(y^{(1)},...,y^{(m)}|x^{(1)},...,x^{(m)})q(x^{(1)},...,x^{(m)}|z)\big).\]
Please note that since $\varphi(.)$ is the down-set of a
non-negative vector, the above Minkowski sum inside the union would
itself be the down-set of a vector.\footnote{This is because for
every two non-negative vectors $\overrightarrow{v_1}$ and
$\overrightarrow{v_2}$, we have
$\lambda\times\Pi(\overrightarrow{v_1})\oplus(1-\lambda)\times\Pi(\overrightarrow{v_2})=\Pi(\lambda\overrightarrow{v_1}+(1-\lambda)\overrightarrow{v_2})$
for any $\lambda \in [0,1]$.} The left hand side can be therefore
written as union over all $q(x^{(1)},x^{(2)},...,x^{(m)},z)$ such
that $q(x^{(1)},x^{(2)},...,x^{(m)}|z)\in \Psi$ for every $z$, of
the down-set of a vector whose $k^{th}$ coordinate equals
$I\big(X^{(i)}:i\in T_k \ \ \textbf{;} \ \ Y^{(j)}:j\in
(T_k)^c|X^{(j)}:j\in (T_k)^c, Z\big)$. Since the
$\overrightarrow{\widetilde{v}'}$ falls inside this union, there
must exist a particular $q(x^{(1)},x^{(2)},...,x^{(m)},z)$ whose
corresponding vector is coordinate by coordinate greater than or
equal to $\overrightarrow{\widetilde{v}'}$. The proof ends by
recalling the definition of $\overrightarrow{\widetilde{v}'}$ and
letting $\epsilon$ converge zero.
\end{proof}
\appendices
\section{Completing the proof of Theorem \ref{Thm:CutSet}}\label{Appendix:CutSet}
\subsection{Checking the first property of Lemma \ref{Thm:MainThm}} \label{Appendix:ExtensionOne}
Given the definition of $\phi(.)$ in equation
\ref{Eq:DefinitionOfHardPhi}, one needs to verify that:
\[\varphi\big(p(y^{(1)}y^{'(1)},...,y^{(m)}y^{'(m)}|x^{(1)},x^{(2)},...,x^{(m)})p(x^{(1)},x^{(2)},...,x^{(m)})\big)
\subseteq\]\[\varphi\big(p(y^{(1)},y^{(2)},...,y^{(m)}|x^{(1)},x^{(2)},...,x^{(m)})p(x^{(1)},x^{(2)},...,x^{(m)})\big)
\oplus\]\[\bigcup_{p(x^{'(1)},x^{'(2)},...,x^{'(m)})\in \Psi}
\varphi\big(p(y^{'(1)},y^{'(2)},...,y^{'(m)}|x^{'(1)},x^{'(2)},...,x^{'(m)})p(x^{'(1)},x^{'(2)},...,x^{'(m)})\big).\]

Take an arbitrary point $\overrightarrow{v}$ inside
$$\varphi\big(p(y^{(1)}y^{'(1)},...,y^{(m)}y^{'(m)}|x^{(1)},x^{(2)},...,x^{(m)})p(x^{(1)},x^{(2)},...,x^{(m)})\big).$$
We would like to prove that there exists $$\overrightarrow{v_1}\in
\varphi
\big(p(y^{(1)},y^{(2)},...,y^{(m)}|x^{(1)},x^{(2)},...,x^{(m)})p(x^{(1)},x^{(2)},...,x^{(m)})\big),$$
and $$\overrightarrow{v_2} \in \varphi
\big(p(y^{'(1)},y^{'(2)},...,y^{'(m)}|x^{'(1)},x^{'(2)},...,x^{'(m)})
p(x^{'(1)},x^{'(2)},...,x^{'(m)})\big),$$ such that
$\overrightarrow{v_1}+\overrightarrow{v_2}\geq \overrightarrow{v}.$

Since $\overrightarrow{v}$ is inside
$$\varphi\big(p(y^{(1)}y^{'(1)},...,y^{(m)}y^{'(m)}|x^{(1)},x^{(2)},...,x^{(m)})p(x^{(1)},x^{(2)},...,x^{(m)})\big),$$
the $k^{th}$ coordinate of $\overrightarrow{v}$ is less than or
equal to $I\big(X^{(i)}:i\in T_k \ \ \textbf{;} \ \
Y^{(j)}Y^{'(j)}:j\in (T_k)^c|X^{(j)}:j\in (T_k)^c\big)$ where $T_k$
is defined as in the proof of Theorem \ref{Thm:CutSet}.

We have:
$$I\big(X^{(i)}:i\in T_k \ \ \textbf{;} \ \ Y^{(j)}Y^{'(j)}:j\in
(T_k)^c|X^{(j)}:j\in (T_k)^c\big)=$$$$I\big(X^{(i)}:i\in T_k \ \
\textbf{;} \ \ Y^{(j)}:j\in (T_k)^c|X^{(j)}:j\in
(T_k)^c\big)+$$$$I\big(X^{(i)}:i\in T_k \ \ \textbf{;} \ \
Y^{'(j)}:j\in (T_k)^c|X^{(j)}:j\in (T_k)^c, Y^{(j)}:j\in (T_k)^c
\big).$$

The second term can be written as:
\begin{equation}\label{Eq:AppendixBEqA}I\big(X^{(i)}:i\in T_k \ \ \textbf{;} \ \
Y^{'(j)}:j\in (T_k)^c|X^{(j)}:j\in (T_k)^c, Y^{(j)}:j\in (T_k)^c
\big)\leq\end{equation}\begin{equation}\label{Eq:AppendixBEqB}
I\big(X^{(i)}X^{'(i)}:i\in T_k \ \ \textbf{;} \ \ Y^{'(j)}:j\in
(T_k)^c|X^{(j)}:j\in (T_k)^c, Y^{(j)}X^{'(j)}:j\in (T_k)^c
\big)=\end{equation}$$I\big(X^{'(i)}:i\in T_k \ \ \textbf{;} \ \
Y^{'(j)}:j\in (T_k)^c|X^{(j)}X^{'(j)}Y^{(j)}:j\in (T_k)^c
\big)+0=$$$$I\big(X^{'(i)}:i\in T_k, X^{(j)}Y^{(j)}:j\in (T_k)^c \ \
\textbf{;} \ \ Y^{'(j)}:j\in (T_k)^c|X^{'(j)}:j\in
(T_k)^c\big)-$$\begin{equation}\label{Eq:AppendixBEqC}
I\big(X^{(j)}Y^{(j)}:j\in (T_k)^c \ \ \textbf{;} \ \ Y^{'(j)}:j\in
(T_k)^c|X^{'(j)}:j\in
(T_k)^c\big)=\end{equation}$$I\big(X^{'(i)}:i\in T_k \ \ \textbf{;}
\ \ Y^{'(j)}:j\in (T_k)^c|X^{'(j)}:j\in (T_k)^c\big)-$$$$
I\big(X^{(j)}Y^{(j)}:j\in (T_k)^c \ \ \textbf{;} \ \ Y^{'(j)}:j\in
(T_k)^c|X^{'(j)}:j\in (T_k)^c\big)\leq
$$$$I\big(X^{'(i)}:i\in T_k \ \ \textbf{;} \ \ Y^{'(j)}:j\in
(T_k)^c|X^{'(j)}:j\in (T_k)^c\big)$$ where in inequality
\ref{Eq:AppendixBEqA} we have used the fact that
$H(X^{'(i)}|Y^{(i)})$=0 to add $X^{'(j)}:\ j\in (T_k)^c$ in the
conditioning part of the mutual information term. We have also added
$X^{'(i)}:i\in T_k $, but this can not cause the expression
decrease. In the equations \ref{Eq:AppendixBEqB} and
\ref{Eq:AppendixBEqC} we have used the following Markov chain
$$\big(Y^{'(i)}:i\in [m]\big) -(X^{'(i)}:i\in [m]) -
(Y^{(i)}X^{(i)}:i\in [m]).$$

The $k^{th}$ coordinate of $\overrightarrow{v}$ is thus less than or
equal to $$I\big(X^{(i)}:i\in T_k \ \ \textbf{;} \ \ Y^{(j)}:j\in
(T_k)^c|X^{(j)}:j\in (T_k)^c\big)+$$$$I\big(X^{'(i)}:i\in T_k \ \
\textbf{;} \ \ Y^{'(j)}:j\in (T_k)^c|X^{'(j)}:j\in (T_k)^c\big).$$

Let $k^{th}$ coordinate of $\overrightarrow{v_1}$ be
$$I\big(X^{(i)}:i\in T_k \ \ \textbf{;} \ \ Y^{(j)}:j\in
(T_k)^c|X^{(j)}:j\in (T_k)^c\big),$$ and the $k^{th}$ coordinate of
$\overrightarrow{v_2}$ be $$I\big(X^{'(i)}:i\in T_k \ \ \textbf{;} \
\ Y^{'(j)}:j\in (T_k)^c|X^{'(j)}:j\in (T_k)^c\big).$$ \hfill
{$\blacksquare$}
\subsection{Checking the second property of Lemma \ref{Thm:MainThm} } \label{Appendix:ExtensionTwo}
Our choice of $\phi(.)$ implies
$$\phi\big(p(y^{(1)},...,y^{(m)}|x^{(1)},...,x^{(m)}),
\{q(x_1,...,x_m)\}\big)=\varphi
(p(y^{(1)},...,y^{(m)}|x^{(1)},...,x^{(m)})p(x^{(1)},...,x^{(m)})).$$
Take an arbitrary point $\overrightarrow{v}$ inside the above set.
The $k^{th}$ coordinate of $\overrightarrow{v}$ is less than or
equal to $I\big(X^{(i)}:i\in T_k \ \ \textbf{;} \ \ Y^{(j)}:j\in
(T_k)^c|X^{(j)}:j\in (T_k)^c\big)$ where $T_k$ is defined as in the
proof of Theorem \ref{Thm:CutSet}. Since $Y^{(j)}=X^{(j)}$ for $j\in
[m]$, the $k^{th}$ coordinate of $\overrightarrow{v}$ would be less
than or equal to zero. But $\overrightarrow{v}$ also lies in
$\mathbb{R}_{+}^c$, hence it has to be equal to the all zero vector.
\hfill {$\blacksquare$}

\subsection{Checking the third property of Lemma \ref{Thm:MainThm}} \label{Appendix:ExtensionThree}
Given the definition of $\phi(.)$ in equation
\ref{Eq:DefinitionOfHardPhi}, one needs to verify that:
\[\varphi\big(p(z^{(1)},...,z^{(m)}|x^{(1)},x^{(2)},...,x^{(m)})p(x^{(1)},x^{(2)},...,x^{(m)})\big)
\subseteq\]\[\varphi\big(p(y^{(1)},...,y^{(m)}|x^{(1)},x^{(2)},...,x^{(m)})p(x^{(1)},x^{(2)},...,x^{(m)})\big).\]
Take an arbitrary point $\overrightarrow{v}$ inside \[\varphi
(p(z^{(1)},...,z^{(m)}|x^{(1)},x^{(2)},...,x^{(m)})p(x^{(1)},...,x^{(m)})).\]
The $k^{th}$ coordinate of $\overrightarrow{v}$ is less than or
equal to $I\big(X^{(i)}:i\in T_k \ \ \textbf{;} \ \ Z^{(j)}:j\in
(T_k)^c|X^{(j)}:j\in (T_k)^c\big)$ where $T_k$ is defined as in the
proof of Theorem \ref{Thm:CutSet}. The latter vector itself is less
than or equal to a vector, denoted here by $\overrightarrow{v'}$,
whose $k^{th}$ coordinate is equal to $I\big(X^{(i)}:i\in T_k \ \
\textbf{;} \ \ Y^{(j)}:j\in (T_k)^c|X^{(j)}:j\in (T_k)^c\big)$
because $$p(z^{(1)},z^{(2)},...,z^{(m)},
y^{(1)},y^{(2)},...,y^{(m)}|x^{(1)},x^{(2)},...,x^{(m)})
    =$$$$p(y^{(1)},y^{(2)},...,y^{(m)}|x^{(1)},x^{(2)},...,x^{(m)})\prod_{i=1}^m p(z_i|y_i),$$
implying that for every $i \in [m]$,
$I(Z^{(i)}\textbf{;}G^{(i)}|Y^{(i)})$ is zero for $G^{(i)}$ defined
as follows:
$$G^{(i)}=(Z^{(1)},Z^{(2)},...,Z^{(i-1)},
Z^{(i+1)},..., Z^{(m)}, Y^{(1)},Y^{(2)},...,Y^{(i-1)},
Y^{(i+1)},..., Y^{(m)}, X^{(1)},X^{(2)},...,X^{(m)}).$$

Since the point $\overrightarrow{v'}$ is inside
\[\varphi
(p(y^{(1)},...,y^{(m)}|x^{(1)},x^{(2)},...,x^{(m)})p(x^{(1)},...,x^{(m)})),\]
we conclude that
\[\varphi\big(p(z^{(1)},...,z^{(m)}|x^{(1)},x^{(2)},...,x^{(m)})p(x^{(1)},x^{(2)},...,x^{(m)})\big)
\subseteq\]\[\varphi\big(p(y^{(1)},...,y^{(m)}|x^{(1)},x^{(2)},...,x^{(m)})p(x^{(1)},x^{(2)},...,x^{(m)})\big).\]
\hfill {$\blacksquare$}
\section{} \label{Appendix:ExtensionFive}
We will define random variables $\widetilde{\widehat{M}^{'(i)}}$
(for $i \in [m]$) such that for any $i \in [m]$
$$\mathbb{E}\big[\Delta_i(\widetilde{\widehat{M}^{'(i)}},
\widetilde{\widehat{M}^{(i)}})\big]\leq D^{(i)},$$ and furthermore
$$I\big(\widetilde{W}^{(i)}:i\in T_k \ \ \textbf{;} \ \
\widetilde{\widehat{M}^{'(j)}}:j\in (T_k)^c|\widetilde{W}^{(j)}:j\in
(T_k)^c\big)-O(\tau(\epsilon))\leq$$$$I\big(\widetilde{W}^{(i)}:i\in
T_k \ \ \textbf{;} \ \ \widetilde{\widehat{M}^{(j)}}:j\in
(T_k)^c|\widetilde{W}^{(j)}:j\in (T_k)^c\big),$$  where
$\tau(\epsilon)\rightarrow 0$ as $\epsilon \rightarrow 0$.
%

Intuitively speaking, the algorithm for creating
$\widetilde{\widehat{M}^{'(i)}}$ is to begin with
$\widetilde{\widehat{M}^{(i)}}$ ($i \in [m]$), and then perturbs
this set of $m$ random variables in $m$ stages as follows: at the
$r^{th}$ stage, we perturb the $r^{th}$ random variable so that the
average distortion constraint is satisfied while making sure that
changes in the mutual information terms are under control.

More precisely, let $(G_0^{(1)}, G_0^{(2)}, ... , G_0^{(m)})$ be
equal to $(\widetilde{\widehat{M}^{(1)}},
\widetilde{\widehat{M}^{(2)}},...,\widetilde{\widehat{M}^{(m)}})$.
 We define random variables $(G_r^{(1)}, G_r^{(2)}, ... , G_r^{(m)})$ for $\ r\in [m]$
 using $(G_{r-1}^{(1)}, G_{r-1}^{(2)}, ... , G_{r-1}^{(m)})$ in a
 sequential manner as follows: let $G_r^{(i)}:=G_{r-1}^{(i)}$ for all $i\in [m]$, $i\neq r$. Random
variable $G_r^{(r)}$ is defined below by perturbing $G_{r-1}^{(i)}$
in a way that the average distortion between $G_r^{(r)}$ and
$\widetilde{M}^{(r)}$ is less than or equal to $D^{(r)}$ while
making sure that for any $k \in [2^m-2]$,
$$I\big(\widetilde{W}^{(i)}:i\in T_k \ \ \textbf{;} \ \ G_r^{(j)}
:j\in (T_k)^c|\widetilde{W}^{(j)}:j\in
(T_k)^c\big)-I\big(\widetilde{W}^{(i)}:i\in T_k \ \ \textbf{;} \ \
G_{r-1}^{(j)} :j\in (T_k)^c|\widetilde{W}^{(j)}:j\in (T_k)^c\big)$$
is of order $O(\tau_r(\epsilon))$ where $\tau_r(.)$ is a real-valued
function that satisfies the property that
$\tau_r(\epsilon)\rightarrow 0$ as $\epsilon \rightarrow 0$. Once
this is done, we can take $\widetilde{\widehat{M}^{'(i)}}=G_m^{(i)}$
for all $i \in [m]$ and let
$\tau(\epsilon)=\sum_{r=1}^m\tau_r(\epsilon)$.

For any arbitrary $k \in [2^m-2]$, as long as $r$ does not belong to
$(T_k)^c$, the expression $$I\big(\widetilde{W}^{(i)}:i\in T_k \ \
\textbf{;} \ \ G_r^{(j)} :j\in (T_k)^c|\widetilde{W}^{(j)}:j\in
(T_k)^c\big)-$$$$I\big(\widetilde{W}^{(i)}:i\in T_k \ \ \textbf{;} \
\ G_{r-1}^{(j)} :j\in (T_k)^c|\widetilde{W}^{(j)}:j\in
(T_k)^c\big),$$ would be zero no matter how $G_r^{(r)}$ is defined.
We should therefore consider only the cases where $r$ belongs to
$(T_k)^c$. In order to define $G_r^{(r)}$, we consider two cases:
\begin{enumerate}
  \item Case $D^{(r)}\neq 0$:
Take a binary random variable $Q_r$ independent of all other random
variables defined in previous stages. Assume that
$P(Q_r=0)=\frac{\epsilon}{D^{(r)}+\epsilon}$ and
$P(Q_r=1)=\frac{D^{(r)}}{D^{(r)}+\epsilon}$. Let $G_r^{(r)}$  be
equal to $G_{r-1}^{(r)}$  if $Q_r=1$, and be equal to
$\widetilde{M}^{(r)}$ if $Q_r=0$. It can be verified that the
average distortion between $G_r^{(r)}$ and $\widetilde{M}^{(r)}$ is
less than or equal to $D^{(r)}$.\footnote{This is because
$\mathbb{E}\big[\Delta_r(G_{r}^{(r)}, \widetilde{M}^{(r)})\big]=
\mathbb{E}\big[\mathbb{E}\big[\Delta_r(G_{r}^{(r)},
\widetilde{M}^{(r)})|Q_r\big]\big]=P(Q_r=1)\mathbb{E}\big[\Delta_r(G_{r-1}^{(r)},
\widetilde{M}^{(r)})\big]\leq \frac{D^{(r)}}{D^{(r)}+\epsilon}\cdot
(D^{(r)}+\epsilon)=D^{(r)}.$}

Take an arbitrary $k \in [2^m-2]$ such that $r\in T_k$. Since for
any five random variables $A,B,B',C,D$ where $D$ is independent of
$(A,B,C)$ we have $I(A;B'|C)-I(A;B|C)\leq I(A;B'|BCD)$,
\footnote{This is because $I(A;B|C)\geq I(A;B'|C)-I(A;B'|BC)\geq
I(A;B'|C)-I(A;B'D|BC) \geq
I(A;B'|C)-I(A;D|BC)-I(A;B'|BCD)=I(A;B'|C)-0-I(A;B'|BCD)=I(A;B'|C)-I(A;B'|BCD).$}
we can write:
$$I\big(\widetilde{W}^{(i)}:i\in T_k \ \
\textbf{;} \ \ G_r^{(j)} :j\in (T_k)^c|\widetilde{W}^{(j)}:j\in
(T_k)^c\big)-$$$$I\big(\widetilde{W}^{(i)}:i\in T_k \ \ \textbf{;} \
\ G_{r-1}^{(j)} :j\in (T_k)^c|\widetilde{W}^{(j)}:j\in
(T_k)^c\big)\leq$$$$I\big(\widetilde{W}^{(i)}:i\in T_k \ \
\textbf{;} \ \ G_r^{(j)} :j\in
(T_k)^c|G_{r-1}^{(j)}\widetilde{W}^{(j)}:j\in (T_k)^c, Q_r\big).$$
We would like to prove that the last term is of order
$\tau_r(\epsilon):=O(\frac{\epsilon}{D^{(r)}+\epsilon})$. Clearly
then $\tau_r(\epsilon) \rightarrow 0$ as $\epsilon \rightarrow 0$
since $D^{(r)}$ is assumed to be non-zero. The last term above is of
order $\frac{\epsilon}{D^{(r)}+\epsilon}$
because:$$I\big(\widetilde{W}^{(i)}:i\in T_k \ \ \textbf{;} \ \
G_r^{(j)} :j\in (T_k)^c|G_{r-1}^{(j)}\widetilde{W}^{(j)}:j\in
(T_k)^c, Q_r\big)=$$$$0\cdot P(Q_r=1)+$$$$
I\big(\widetilde{W}^{(i)}:i\in T_k \ \ \textbf{;} \ \ G_r^{(j)}
:j\in (T_k)^c|G_{r-1}^{(j)}\widetilde{W}^{(j)}:j\in (T_k)^c,
Q_r=0\big)\cdot P(Q_r=0)\leq$$$$ H(\widetilde{W}^{(i)}:i\in
[m])\cdot P(Q_r=0)=O(\frac{\epsilon}{D^{(i)}+\epsilon}).$$
  \item Case $D^{(r)}= 0$: Let the binary random variable $Q_r$ be the indicator function
  $\textbf{1}[\Delta_r(G_{r-1}^{(r)}, \widetilde{M}^{(r)})=0]$.
  Let $G_r^{(r)}$  be equal to $G_{r-1}^{(r)}$  if $Q_r=1$, and be equal
to $\widetilde{M}^{(r)}$ if $Q_r=0$. The average distortion between
$G_r^{(r)}$ and $\widetilde{M}^{(r)}$ is clearly zero. Since the
average distortion between $G_{r-1}^{(r)}$ and $
\widetilde{M}^{(r)}$ is less than or equal to $\epsilon$, we get
that $P(Q_r=0)\leq \frac{\epsilon}{\delta_{min}}$ where
$\delta_{min}$ is defined as follows:
($\widetilde{\mathcal{M}}^{(r)}$ here refers to the set
$\widetilde{M}^{(r)}$ is taking value from)
$$\delta_{min}=\min_{\scriptsize{\begin{array}{l}
              i,j \in
\widetilde{\mathcal{M}}^{(r)} \mbox{ such that }\\ \Delta_r(i,
j)\neq 0
            \end{array}}}\Delta_r(i, j).$$

Take an arbitrary $k \in [2^m-2]$ such that $r\in T_k$.
$$I\big(\widetilde{W}^{(i)}:i\in T_k \ \
\textbf{;} \ \ G_r^{(j)} :j\in (T_k)^c|\widetilde{W}^{(j)}:j\in
(T_k)^c\big)-$$$$I\big(\widetilde{W}^{(i)}:i\in T_k \ \ \textbf{;} \
\ G_{r-1}^{(j)} :j\in (T_k)^c|\widetilde{W}^{(j)}:j\in
(T_k)^c\big)=$$
$$H\big(\widetilde{W}^{(i)}:i\in T_k|
G_r^{(j)}\widetilde{W}^{(j)}:j\in
(T_k)^c\big)-$$$$H\big(\widetilde{W}^{(i)}:i\in T_k|
G_{r-1}^{(j)}\widetilde{W}^{(j)}:j\in (T_k)^c\big)\leq$$
$$H(Q_r)+H\big(\widetilde{W}^{(i)}:i\in T_k|
G_r^{(j)}\widetilde{W}^{(j)}:j\in (T_k)^c,
Q_r\big)-$$$$H\big(\widetilde{W}^{(i)}:i\in T_k|
G_{r-1}^{(j)}\widetilde{W}^{(j)}:j\in (T_k)^c, Q_r\big)\leq$$
$$H(Q_r)+P(Q_r=0)\cdot H\big(\widetilde{W}^{(i)}:i\in T_k|
G_r^{(j)}\widetilde{W}^{(j)}:j\in (T_k)^c,
Q_r=0\big)\leq$$$$H(Q_r)+P(Q_r=0)\cdot H(\widetilde{W}^{(i)}:i\in
[m]).$$ Let $\tau_r(\epsilon):=H(Q_r)+P(Q_r=0)\cdot
H(\widetilde{W}^{(i)}:i\in [m])$. Since $P(Q_r=0)$ is bounded from
above by $\frac{\epsilon}{\delta_{min}}$ that converges to zero as
$\epsilon \rightarrow 0$, $\tau_r(\epsilon)$ too would converge to
zero as $\epsilon \rightarrow 0$.
\end{enumerate}
  \hfill
{$\blacksquare$}

\section*{Acknowledgement}
The authors would like to thank TRUST (The Team for Research in
Ubiquitous Secure Technology), which receives support from the
National Science Foundation (NSF award number CCF-0424422) and the
following organizations: Cisco, ESCHER, HP, IBM, Intel, Microsoft,
ORNL, Pirelli, Qualcomm, Sun, Symantec, Telecom Italia and United
Technologies, for their support of this work. The research was also
partially supported by NSF grants CCF-0500023, CCF-0635372, and
CNS-0627161.

\end{document}